\documentclass[twoside,leqno,onecolumn]{article}

\usepackage{ltexpprt}
\usepackage{srcltx}
\usepackage{color}
\usepackage{graphicx}
\usepackage{bm}
\usepackage[cmex10]{amsmath}
\usepackage{algorithmic}
\usepackage{algorithm}
\usepackage{amssymb}
\usepackage{multirow}

\newtheorem{assumption}{Assumption}
\newtheorem{remark}{Remark}

\begin{document}

\title{Extremum Seeking-based Indirect Adaptive Control for Nonlinear Systems with State and Time-Dependent Uncertainties}
\author{Mouhacine Benosman and Meng Xia\thanks{Mouhacine Benosman
(m{\_}benosman@ieee.org) is with Mitsubishi Electric Research
Laboratories, Cambridge, MA 02139, USA.Meng Xia (mxia@nd.edu) is
with the with the Department of Electrical Engineering, University
of Notre Dame, Notre Dame, IN 46556, USA (collaborated on this
project during her internship at MERL). }\\\\This manuscript is
based on the recent results published by the authors in\\ the IEEE
European Control Conference 2015 and the SIAM Control and its
Applications 20015}

\date{}
\maketitle

\begin{abstract}
We study in this paper the problem of adaptive trajectory tracking
for nonlinear systems affine in the control with bounded
state-dependent and time-dependent uncertainties. We propose to
use a modular approach, in the sense that we first design a robust
nonlinear state feedback which renders the closed loop input to
state stable (ISS) between an estimation error of the uncertain
parameters and an output tracking error. Next, we complement this
robust ISS controller with a model-free multiparametric extremum
seeking (MES) algorithm to estimate the model uncertainties. The
combination of the ISS feedback and the MES algorithm gives an
indirect adaptive controller. We show the efficiency of this
approach on a two-link robot manipulator example.
\end{abstract}

\section{Introduction}

Input-output feedback linearization has been proven to be a
powerful control design for trajectory tracking and stabilization
of nonlinear systems \cite{I89}. The basic idea is to first
transform a nonlinear system into a simplified linear equivalent
system and then use the linear design techniques to design
controllers in order to satisfy stability and performance
requirements. One shortcoming of the feedback linearization
approach is that it requires precise system modelling \cite{I89}.
When there exist model uncertainties, a robust input-output
linearization approach needs to be developed. For instance,
high-gain observers \cite{Frei08} and linear robust controllers
\cite{Fran06} have been proposed in combination with the feedback
linearization techniques. Another approach to deal with model
uncertainties is using adaptive control methods. Of particular
interest to us is the modular approach to adaptive nonlinear
control, e.g. \cite{KKK95}. In this approach, first the controller
is designed by assuming all the parameters are known and then an
identifier is used to guarantee certain boundedness of the
estimation error. The identifier is independent of the designed
controller and thus this is called `modular' approach.

On the other hand, extremum seeking (ES) method is a model-free
control approach, e.g.\cite{Ariy03}, which has been applied to
many industrial systems, such as electromagnetic actuators \cite{
Beno13,Atinc13}, compressors \cite{WYK00}, and stirred-tank
bioreactors \cite{Zhang03}. Many papers have dedicated to
analyzing the ES algorithms convergence when applied to a static
or a dynamic known maps, e.g.\cite{Rote00,Ariy03,KW00,Teel01},
however, much fewer papers have been dealing with the use of ES in
the case of static or dynamic uncertain maps. The case of ES
applied to an uncertain static and dynamic mapping, was
investigated in \cite{Nesic13}, where the authors considered
systems with constant parameter uncertainties. However, in
\cite{Nesic13}, the authors used ES to optimize a given
performance (via optimizing a given performance function), and
complemented the ES algorithm with classical model-based
filters/estimators to estimate the states and the unknown constant
parameters of the system, which is one of the main differences
with the approach that we want to present here (originally
introduced by the authors for a specific mechatronics application
in \cite{Atinc13,BA13,patent_1}), where the ES is not only used to
optimize a given performance but is also used to estimate the
uncertainties of the model, without the need for extra model-based
filters/estimators. \\In this work, we build upon the existing ES
results to provide a framework which combines ES results and
robust model-based nonlinear control to propose an ES-based
indirect adaptive controller, where the ES algorithm is used to
estimate, in closed-loop, the uncertainties of the model.
Furthermore, we focus here on a particular class of nonlinear
systems which are input-output linearizable through static state
feedback \cite{Khalil02}. We assume that the uncertainties in the
linearized model are bounded additive as guaranteed by the
`matching condition' \cite{Beno09}. The control objective is to
achieve asymptotic tracking of a desired trajectory. The proposed
adaptive control is designed as follows. In the first step, we
design a controller for the nominal model (i.e. when the
uncertainties are assumed to be zero) so that the tracking error
dynamics is asymptotically stable. In the second step, we use a
Lyapunov reconstruction method \cite{Beno10} to show that the
error dynamics are input-to-state stable (ISS)
\cite{Khalil02,Sontag95} where the estimation error in the
parameters is the input to the system and the tracking error
represents the system state. Finally, we use ES to estimate the
uncertain model parameters so that the the tracking error will be
bounded and decreasing, as guaranteed by the ISS property. To
validate the results, we apply our results on a two-link robotic
manipulators \cite{Spong92}.\\ Similar ideas of ES-based adaptive
control for nonlinear systems have been introduced in
\cite{Beno13,Atinc13}. In these two works, the problem of adaptive
robust control of electromagnetic actuators was studied, where ES
was used to tune the feedback gains of the nonlinear controller in
\cite{Beno13} and ES was used to estimate the unknown parameters
in \cite{Atinc13}. An extension to the general case of nonlinear
systems was proposed in \cite{ben_ifac14,ben_ECC14}. We relax here
the strong assumption, used in \cite{ben_ifac14,ben_ECC14}, about
the existence of an ISS feedback controller, and propose a
constructive proof to design such an ISS feedback for the
particular case of nonlinear systems affine in the control.

The rest of the paper is organized as follows. In Section
\ref{sec:pre}, we present notations, and some fundamental
definitions and results that will be needed in the sequel. In
Section \ref{sec:problem}, we provide our problem formulation. The
nominal controller design are presented in Section \ref{sec:norm}.
In Section \ref{sec:robust}, a robust controller is designed which
guarantees ISS from the estimation errors input to the tracking
errors state. In Section \ref{sec:ES}, the ISS controller is
complemented with an MES algorithm to estimate the model
uncertainties. Section \ref{sec:sim} is dedicated to an
application example and the paper conclusion is given in Section
\ref{sec:con}.

\section{Preliminaries}
\label{sec:pre} Throughout the paper, we use $\|\cdot\|$ to denote
the Euclidean norm; i.e. for a vector $x\in\mathbb{R}^n$, we have
$\|x\| \triangleq\|x\|_2 =\sqrt{x^Tx}$, where $x^T$ denotes the
transpose of the vector $x$.  The $1$-norm of $x\in\mathbb{R}^n$
is denoted by $\|x\|_1$. We use the following norm properties for
the need of our proof:
\begin{enumerate}
\item for any $x\in\mathbb{R}^n$, $\|x\|\leq\|x\|_1$;
\item for any $x,~y\in\mathbb{R}^n$, $\|x\|-\|y\|~\leq~\|x-y\|$;
\item for any $x,~y\in\mathbb{R}^n$, $x^Ty~\leq~\|x\|\|y\|$.
\end{enumerate}
Given $x\in\mathbb{R}^m$, the signum function is defined as
\begin{align*}
{\rm{sign} } (x)~\triangleq~[{\rm{sign} } (x_1),~{\rm{sign} } (x_2),~\cdots,~{\rm{sign} } (x_m)]^T,
\end{align*}
where $x_i$ denotes the $i$-th ($1\leq i\leq m$) element of $x$ and
\[ {\rm{sign} } (x_i) ~=~ \left\{ \begin{array}{lll}
         1 & \quad\mbox{if $x_i ~>~ 0$}\\
         0 & \quad\mbox{if $x_i ~=~ 0$}\\
        -1 & \quad\mbox{if $x_i ~<~ 0$}\end{array} \right. \]
We have $x^T{\rm{sign} } (x)~=~\|x\|_1$.

For an $n\times n$
matrix $P$, we denote by $P>0$ if it is positive definite. Similarly, we denote
by $P<0$ if it is negative definite. We use ${\rm diag}\{A_1,~A_2,~\cdots,~A_n\}$ to denote a
diagonal block matrix with $n$ blocks. For a matrix $B$, we denote $B(i,j)$ as the element that locates at the $i$-th row and $j$-th column of matrix $B$. We denote $I_n$ as the identity matrix or simply $I$ if the dimension is clear from the context.

  We use $\dot{f}$ to denote the time derivative of $f$ and $f^{(r)}(t)$ for the $r$-th derivative of $f(t)$, i.e. $f^{(r)}~\triangleq~\frac{d^r f}{dt}$. We denote by $\mathbb{C}^k$ functions that are $k$ times differentiable and by $\mathbb{C}^\infty$ a smooth function. A continuous function $\alpha : [0,a)\rightarrow [0,\infty)$ is said to belong to class $\mathcal{K}$ if it is strictly increasing and $\alpha(0) = 0$. It is said to belong to class $\mathcal{K}_{\infty}$ if $a = \infty$ and $\alpha(r)\rightarrow \infty$ as $r\rightarrow \infty$ \cite{Khalil02}. A continuous function $\beta:[0,a)\times[0,\infty)\rightarrow [0,\infty)$ is said to belong to class $\mathcal{KL}$ if, for a fixed
$s$, the mapping $\beta(r,s)$ belongs to class $\mathcal{K}$ with respect to $r$ and, for each fixed $r$, the mapping $\beta(r,s)$ is decreasing with respect to $s$ and $\beta(r,s)\rightarrow 0$ as $s\rightarrow \infty$ \cite{Khalil02}.
%

Consider the system
\begin{align}
\label{eq:sys}
\dot{x} ~=~f(t,x,u)
\end{align}
where $f: [0,\infty)\times \mathbb{R}^n\times\mathbb{R}^m\rightarrow \mathbb{R}^n$ is piecewise continuous in $t$ and locally Lipschitz in $x$ and $u$, uniformly in $t$. The input $u(t)$ is piecewise continuous, bounded function of $t$ for all $t\geq0$.

\begin{Definition}[\cite{Khalil02,Mali05}]
\label{def:ISS}
The system (\ref{eq:sys}) is said to be \emph{input-to-sate stable} (ISS) if there exist a class $\mathcal{KL}$ function $\beta$ and a class $\mathcal{K}$ function $\gamma$ such that for any initial state $x(t_0)$ and any bounded input $u(t)$, the solution $x(t)$ exists for all $t\geq t_0$ and satisfies
\begin{align*}
\|x(t)\|\leq \beta(\|x(t_0)\|,t-t_0) + \gamma(\sup_{t_0\leq\tau\leq t} \|u(\tau)\|).
\end{align*}
\end{Definition}

\begin{theorem}[\cite{Khalil02,Mali05}]
\label{thm:iss}
Let $V: [0,\infty)\times\mathbb{R}^n\rightarrow \mathbb{R}$ be a continuously differentiable function such that
\begin{align}
\label{eq:iss}
\alpha_1(\|x\|) \leq& V(t,x)\leq \alpha_2(\|x\|)\notag\\
\frac{\partial V}{\partial t} + \frac{\partial V}{\partial x}f(t,x,u)&\leq -W(x),\quad \forall \|x\|\geq\rho(\|u\|)>0
\end{align}
for all $(t,x,u)\in[0,\infty)\times\mathbb{R}^n\times \mathbb{R}^m$, where $\alpha_1$, $\alpha_2$ are class $\mathcal{K}_{\infty}$ functions, $\rho$
is a class $\mathcal{K}$ function, and $W(x)$ is a continuous positive definite function on $\mathbb{R}^n$. Then, the system (\ref{eq:sys}) is input-to-state stable (ISS).
\end{theorem}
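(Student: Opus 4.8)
The plan is to follow the standard ISS-Lyapunov argument: use the decrease inequality in (\ref{eq:iss}) to show that $V$ strictly decreases whenever the state is large relative to the input, and then convert this into the required $\mathcal{KL}$-plus-$\mathcal{K}$ bound through a scalar comparison argument. Throughout, I would fix an initial time $t_0$, an initial state $x(t_0)$, and a bounded input, and write $\mu \triangleq \sup_{\tau \geq t_0}\|u(\tau)\|$; by causality the solution on $[t_0,t]$ depends only on the input restricted to $[t_0,t]$, so at the end $\mu$ may be replaced by $\sup_{t_0 \leq \tau \leq t}\|u(\tau)\|$ to match Definition \ref{def:ISS}.

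First I would recast the decrease condition purely in terms of $V$. Since $\alpha_2\in\mathcal{K}_\infty$ is invertible and $V(t,x)\leq\alpha_2(\|x\|)$, the level condition $V(t,x)\geq\alpha_2(\rho(\mu))$ forces $\|x\|\geq\rho(\mu)\geq\rho(\|u\|)$ (using that $\rho$ is increasing), so the hypothesis of (\ref{eq:iss}) applies and $\dot V\leq -W(x)<0$ on this region. To obtain an autonomous differential inequality in $V$ alone, I would lower-bound the continuous positive-definite function $W$ on the relevant bounded region by a class $\mathcal{K}$ function $\alpha_3$, i.e. $W(x)\geq\alpha_3(\|x\|)$, and combine this with $\|x\|\geq\alpha_2^{-1}(V)$ to get
\begin{align*}
\dot V \leq -\alpha_3\big(\alpha_2^{-1}(V)\big) \triangleq -\alpha(V), \qquad \text{whenever } V(t,x)\geq\alpha_2(\rho(\mu)),
\end{align*}
where $\alpha=\alpha_3\circ\alpha_2^{-1}$ is of class $\mathcal{K}$.

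Next I would invoke the comparison lemma together with the standard fact (see \cite{Khalil02}) that the scalar equation $\dot y=-\alpha(y)$ admits a class $\mathcal{KL}$ solution operator $\sigma$, so that $y(t)=\sigma(y(t_0),t-t_0)$. While the trajectory stays in the region $V\geq\alpha_2(\rho(\mu))$, this yields $V(t,x(t))\leq\sigma\big(V(t_0,x(t_0)),t-t_0\big)$. I would then split the time axis into two phases. In the first phase, combining the comparison bound with $\alpha_1(\|x\|)\leq V$ and $V(t_0)\leq\alpha_2(\|x(t_0)\|)$ gives
\begin{align*}
\|x(t)\| \leq \alpha_1^{-1}\big(\sigma(\alpha_2(\|x(t_0)\|),\,t-t_0)\big) \triangleq \beta(\|x(t_0)\|,\,t-t_0),
\end{align*}
with $\beta$ of class $\mathcal{KL}$. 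Because $\sigma$ decays to zero, there is a finite time after which $V$ reaches the level $\alpha_2(\rho(\mu))$; a positive-invariance argument — whenever $V$ touches this level the derivative is already strictly negative — then shows $V(t,x(t))\leq\alpha_2(\rho(\mu))$ thereafter, hence $\|x(t)\|\leq\alpha_1^{-1}(\alpha_2(\rho(\mu)))\triangleq\gamma(\mu)$ with $\gamma=\alpha_1^{-1}\circ\alpha_2\circ\rho$ of class $\mathcal{K}$. Summing the two nonnegative bounds delivers the ISS estimate on all of $[t_0,\infty)$, and the uniform boundedness of $\|x\|$ rules out finite escape, so the solution exists globally as required.

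The steps needing the most care are the conversion of $\dot V\leq-W(x)$ into the autonomous inequality $\dot V\leq-\alpha(V)$ — which requires lower-bounding $W$ by a class $\mathcal{K}$ function only on the compact set carved out by the initial sublevel set, since mere positive definiteness of $W$ need not yield a global bound — and the positive-invariance argument for the \emph{time-varying} $V$ in the second phase, where one must rule out any later violation of $V\leq\alpha_2(\rho(\mu))$. The construction of the class $\mathcal{KL}$ function $\sigma$ from the comparison equation is the one genuinely technical ingredient, but it is a self-contained result that I would cite from \cite{Khalil02} rather than reprove.
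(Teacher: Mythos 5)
The paper does not prove this theorem itself---it is quoted directly from the cited references---so the only meaningful benchmark is the standard proof of the ISS-Lyapunov theorem in Khalil (Theorem 4.19), and your argument is precisely that proof: recasting the decrease condition as an autonomous inequality $\dot V\leq-\alpha(V)$ via a class~$\mathcal{K}$ lower bound on $W$ over the relevant compact sublevel set, applying the comparison lemma to obtain a $\mathcal{KL}$ estimate, and splitting into a transient phase and a positively invariant residual set. The sketch is correct, and you have flagged exactly the two points that require care (the compact-set lower bound on $W$ and the invariance of the sublevel set $\{V\leq\alpha_2(\rho(\mu))\}$ for the time-varying $V$), so nothing further is needed.
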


\begin{remark}
Note that other equivalent definitions for ISS have been given in \cite[pp. 1974-1975]{Mali05}. For instance, Theorem \ref{thm:iss} holds with all the assumptions are the same except that the inequality (\ref{eq:iss}) is replaced by
\begin{align*}
\frac{\partial V}{\partial t} + \frac{\partial V}{\partial x}f(t,x,u)&\leq -\mu(\|x\|) + \Omega(\|u\|)
\end{align*}
where $\mu\in\mathcal{K}_{\infty}\bigcap C^1$ and $\Omega\in\mathcal{K}_{\infty}$.
\end{remark}


\section{The case of nonlinear system model with state dependent
uncertainties}\label{sec:problem} We consider here affine
uncertain nonlinear systems of the form:
\begin{equation}
\begin{array}{l}
\displaystyle \dot{x} ~=~ f(x) + \Delta f(x) + g(x)u,\quad x(0) = x_0\\
\displaystyle y~=~ h(x)
\end{array}
\label{eq:model}
\end{equation}
where $x\in\mathbb{R}^n$, $u\in \mathbb{R}^p$, $y\in\mathbb{R}^m$
($p~\geq~ m$), represent respectively the state, the input and the
controlled output vectors, $x_0$ is a given initial condition,
$\Delta f(x)$ is a vector field representing additive model
uncertainties. The vector fields $f$, $\Delta f$, columns of $g$
and function $h$ satisfy the following assumptions.

\begin{assumption}
The function $f:\mathbb{R}^n \rightarrow \mathbb{R}^n$ and the
columns of $g: \mathbb{R}^n \rightarrow \mathbb{R}^p$ are
$\mathbb{C}^{\infty}$ vector fields on a bounded set $X$ of
$\mathbb{R}^n$ and $h : \mathbb{R}^n \rightarrow \mathbb{R}^m$ is
a $\mathbb{C}^{\infty}$ vector on $X$. The vector field $\Delta
f(x) $ is $\mathbb{C}^1$ on $X$.
\end{assumption}

\begin{assumption}
System (\ref{eq:model}) has a well-defined (vector) relative
degree $\{r_1,~r_2,~\cdots,~r_m\}$ at each point $x^0\in X$, and
the system is linearizable, i.e. $\sum_{i=1}^m r_i=n$.
\end{assumption}


\begin{assumption}
The desired output trajectories $y_{id}$ ($1\leq i \leq m$) are
smooth functions of time, relating desired initial points
$y_{id}(0)$ at $t=0$ to desired final points $y_{id}(t_f)$ at
$t=t_f$.
\end{assumption}

\subsection{Control objectives}
Our objective is to design a state feedback adaptive controller so
that the tracking error is uniformly bounded, whereas the tracking
upper-bound can be made smaller over the ES learning iterations.
We stress here that the goal of the ES is not stabilization but
rather performance optimization, i.e. estimating online the
uncertain part of the model and thus improving the performance of
the overall controller. To achieve this control objective, we
proceed as follows. First, we design a robust controller which can
guarantee the input-to-state stability (ISS) of the tracking error
dynamics w.r.t the estimation errors input. Then, we combine this
controller with a model-free extremum-seeking algorithm to
iteratively estimate the uncertain parameters, to optimize online
a desired performance cost function. \subsection{Nominal
Controller Design} \label{sec:norm}
Under Assumption 2 and nominal conditions, i.e. when $\Delta f(x)
= 0$, system (\ref{eq:model}) can be written as
\begin{align}
\label{eq:linear}
y^{(r)}(t) ~=~ b(\xi(t)) + A (\xi(t)) u(t),
\end{align}
where
\begin{equation}
\begin{array}{l}
\displaystyle y^{(r)}(t) ~=~ [y^{(r_1)}_{1}(t), ~y^{(r_2)}_{2}(t),~\cdots,~ y^{(r_m)}_{m}(t)]^T\\
\displaystyle \xi(t) ~=~ [\xi^1(t),~\cdots,~\xi^m(t)]^T\\
\displaystyle \xi ^i(t) ~=~[y_i(t),~\cdots,~y_i^{(r_i-1)}(t)], \quad 1\leq i\leq m
\end{array}
\label{eq:linear_c}
\end{equation}
The functions $b(\xi)$, $A(\xi)$ can be written as functions of
$f$, $g$ and $h$, and $A(\xi)$ is non-singular in $\tilde{X}$,
where $\tilde X$ is the image of the set of $X$ by the
diffeomorphism $x\rightarrow\xi$ between the states of system
(\ref{eq:model}) and the linearized model (\ref{eq:linear}).

At this point, we introduce one more assumption on system (\ref{eq:model}).

\begin{assumption}
The additive uncertainties $\Delta f(x)$ in (\ref{eq:model})
appear as additive uncertainties in the input-output linearized
model (\ref{eq:linear})-(\ref{eq:linear_c}) as follows (see also
\cite{Beno09})
\begin{align}
\label{eq:model_un} y^{(r)}(t) ~=~ b(\xi(t)) + A (\xi(t)) u(t) +
\Delta b(\xi(t)),
\end{align}
where $\Delta b(\xi(t))$ is $\mathbb{C}^1$ on $\tilde X$.
\end{assumption}

If we consider the nominal model (\ref{eq:linear})first, then we can define a virtual input vector $v(t)$ as
\begin{align}
\label{eq:input}
v(t) ~=~ b(\xi(t)) + A (\xi(t)) u(t).
\end{align}
Combining (\ref{eq:linear}) and (\ref{eq:input}), we can obtain the following input-output mapping
\begin{align}
\label{eq:virtual}
y^{(r)}(t) ~=~v(t).
\end{align}
Based on the linear system (\ref{eq:virtual}), it is straightforward to apply a stabilizing controller for the nominal system (\ref{eq:linear}) as
\begin{align}
\label{eq:norm}
u_{n} = A^{-1}(\xi) \left[v_{s}(t,\xi) - b(\xi)\right],
\end{align}
where $v_s$ is a $m\times1$ vector and the $i$-th ($1\leq i \leq m$) element $v_{si}$ is given by
\begin{align}
\label{eq:feedback}
v_{si} = y_{id}^{(r_i)} - K_{r_i}^{i}(y_i^{(r_i-1)}-y_{id}^{(r_i-1)})-\cdots-K_{1}^{i}(y_i-y_{id}).
\end{align}
Denote the tracking error as $e_i(t) ~\triangleq~ y_i(t)- y_{id}(t)$, we obtain  the following tracking error dynamics
\begin{align}
\label{eq:error}
e_i^{(r_i)}(t) +  K^i_{r_i}e^{(r_i-1)}(t) + \cdots + K^i_1e_i(t) = 0,
\end{align}
where $i \in \{1,~2,~\cdots,~m\}$. By selecting the gains $K^i_j$
where $i \in \{1,~2,~\cdots,~m\}$ and $j\in\{1,~2,~\cdots,~r_i\}$,
we can obtain global asymptotic stability of the tracking errors
$e_i(t)$. To formalize this condition, we make the following
assumption.

\begin{assumption}
There exists a non-empty set $\mathcal{A}$ where
$K^i_j\in\mathcal{A} $ such that the polynomials in
(\ref{eq:error}) are Hurwitz, where $i \in \{1,~2,~\cdots,~m\}$
and $j\in\{1,~2,~\cdots,~r_i\}$.
\end{assumption}


To this end, we define $z=[z^1,~z^2,~\cdots,~z^m]^T$, where $z^i=[e_i,~\dot{e_i},~\cdots,~e_i^{(r_i-1)}]$ and $i\in\{1,~2,\cdots,~m\}$. Then, from (\ref{eq:error}), we can obtain
\begin{align*}
\dot{z}~=~ \tilde{A} z,
\end{align*}
where $\tilde{A}\in \mathbb{R}^{n\times n}$ is a diagonal block matrix given by
\begin{align}
\label{eq:tildA}
\tilde{A}~=~{\rm{diag}}\{\tilde A_1,~\tilde A_2,~\cdots,~\tilde A_m\},
\end{align}
and $\tilde A_i$ ($1\leq i\leq m$) is a $r_i\times r_i$ matrix given by
\[ \tilde{A}_i~=~\left[ \begin{array}{cccccc}
0& 1&  \\
0& &1&\\
0&&&\ddots\\
\vdots&&&&1\\
-K^i_1\quad&-K^i_2\quad&\cdots\quad&\cdots\quad&-K^i_{r_i}
\end{array} \right].\]
As discussed above, the gains $K_j^i$ can be chosen so that the
matrix $\tilde{A}$ is Hurwitz. Thus, there exists a positive
definite matrix $P~>~ 0$ such that (see e.g. \cite{Khalil02})
\begin{align}
\label{eq:lyap}
\tilde A^T P + P \tilde A ~=~-I.
\end{align} 


\subsection{Robust Controller Design} \label{sec:robust} We now
consider the uncertain model (\ref{eq:model}), i.e. when $\Delta
f(x)\neq0$. The corresponding linearized model is given by
(\ref{eq:model_un}) where $\Delta b (\xi(t))\neq0$. The global
asymptotic stability of the error dynamics (\ref{eq:error}) cannot
be guaranteed anymore due to the additive uncertainty $\Delta
b(\xi(t))$. We use Lyapunov reconstruction techniques to design a
new controller so that the tracking error is guaranteed to be
bounded given that the estimate error of $\Delta b(\xi(t))$ is
bounded. The new controller for the uncertain model
(\ref{eq:model_un}) is defined as
\begin{align}
\label{eq:full}
u_{f}~=~u_{n} + u_{r},
\end{align}
where the nominal controller $u_{n}$ is given by (\ref{eq:norm})
and the robust controller $u_r$ will be given later on based on
particular forms of the uncertainty $\Delta b(\xi(t))$. By using
the controller (\ref{eq:full}), from (\ref{eq:model_un}) we obtain
\begin{align}
\label{eq:io_new}
y^{(r)}(t) &~=~ b(\xi(t)) + A (\xi(t)) u_{f} + \Delta b(\xi(t))\notag\\
&~=~ b(\xi(t)) + A (\xi(t)) u_{n} + A (\xi(t)) u_{r}+ \Delta b(\xi(t))\notag\\
& ~=~v_{s}(t,\xi)+ A (\xi(t)) u_{r}+ \Delta b(\xi(t)),
\end{align}
where (\ref{eq:io_new}) holds from (\ref{eq:norm}). Thus, we have
\begin{align}
\label{eq:error_new}
&~e_i^{(r_i)}(t) +  K^i_{r_i}e^{(r_i-1)}(t) + \cdots + K^i_1e_i(t)\notag\\
~=&~A (\xi(t)) u_{r}+ \Delta b(\xi(t))
\end{align}
Further, the dynamics for $z$ is given by
\begin{align}
\label{eq:dyn_z}
\dot{z}~=~ \tilde{A} z + \tilde{B} \delta,
\end{align}
where $\tilde{A}$ is defined in (\ref{eq:tildA}), $\delta$ is a $m\times 1$ vector given by
\begin{align}
\label{eq:delta} \delta~=~A (\xi(t)) u_{r}+ \Delta b(\xi(t)),
\end{align}
and the matrix $\tilde B\in\mathbb{R}^{n\times m}$ is given by
\begin{align}
\label{eq:tildB}
\tilde{B}~=~\left[ \begin{array}{c}
\tilde{B}_1 \\
\tilde{B}_2\\
\vdots\\
\tilde{B}_m
\end{array} \right],
\end{align}
with $\tilde B_i$ ($1\leq i\leq m$) given by a $r_i\times m$ matrix such that
\[ \tilde{B}_i(l,q) ~=~ \left\{ \begin{array}{lll}
         1 & \quad\mbox{if $l = r_i$ and $q = i$}\\
         0  & \quad\mbox{otherwise}\end{array} \right. \]
%
%
If we apply $V(z)=z^TPz$ as a Lyapunov function for the dynamics
(\ref{eq:dyn_z}), where $P$ is the solution of the Lyapunov
equation (\ref{eq:lyap}), then we obtain
\begin{align}
\label{eq:general}
\dot{V}(t)&~=~\frac{\partial V}{\partial z}\dot{z}\notag\\
&~=~z^T(\tilde A^T P + P \tilde A)z + 2 z^T P \tilde B \delta\notag\\
&~=~-\|z\|^2 + 2 z^T P \tilde B \delta,
\end{align}
%
where $\delta$ given by (\ref{eq:delta}) depends on the robust controller $u_r$.

Next, we will design the controller $u_{r}$ based on the
particular forms of the uncertainties that appear in
(\ref{eq:model_un}), i.e. $\Delta b(\xi(t))$. For notational
convenience, the unknown parameter vector/matrix is denoted by
$\Delta$ and the estimate for the unknowns is denoted by
$\widehat{\Delta}(t)$. Further, the estimation error vector/matrix
is given by $e_\Delta(t) = \Delta-\widehat{\Delta}(t)$, where the
dimensions of $\Delta$ (and in turn, $\widehat{\Delta}(t)$ and
$e_\Delta(t)$ will be clear from the context.


\subsection{The case of bounded state-dependent uncertainties}
\label{sec:case2} We consider the case where the unknown $\|\Delta
b (\xi(t))\|$ is upper bounded by a function of the state
$\xi(t)$, i.e.
\begin{align}
\label{eq:upb} \|\Delta b(\xi(t))\|~\leq~\|\Delta\|\|L(\xi)\|,
\end{align}
where $\Delta\in\mathbb{R}^{m\times m}$ is constant, and $L(\xi)$
is a known bounded state function. Assume, for now, that we can
obtain the estimate of $\Delta(i,j)$, which may be time-varying
and is denoted by $\widehat{\Delta}(i,j)$, for $i,j =
1,2,\dots,m$. Let $\widehat{\Delta}(t)$
 be the matrix with the element $\widehat{\Delta}(i,j)$.
 We use the following robust controller
\begin{align}
\label{eq:case2}
u_{r}=&~-A^{-1}(\xi)\tilde B ^TP z\|L(\xi)\|^2 \notag\\
&~-A^{-1}(\xi) \|\widehat{\Delta}(t)\|\|L(\xi)\|{\rm{sign} }  (\tilde B ^TP z).
\end{align}
The closed-loop error dynamics can be written in the form of
\begin{align}
\label{eq:cl_case2}
\dot{z}~=~f(t,z,e_\Delta),
\end{align}
where $e_\Delta(t)$ is the system input and $z(t)$ is the system state.

\begin{theorem}
\label{thm:case2}
Consider the system (\ref{eq:model}), under Assumptions 1-5 and
the assumption that $\Delta b(\xi(t))$ satisfies (\ref{eq:upb}),
with the feedback controller (\ref{eq:full}), where $u_n$ is given
by (\ref{eq:norm}) and $u_r$ is given by (\ref{eq:case2}). Then,
the closed-loop system (\ref{eq:cl_case2}) is ISS from the
estimation errors input $e_\Delta(t)\in\mathbb{R}^{m\times m}$ to
the tracking errors state $z(t)\in\mathbb{R}^n$.
\end{theorem}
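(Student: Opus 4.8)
The plan is to confirm the ISS property through Theorem \ref{thm:iss}, in the equivalent dissipation form recorded in the Remark following it, using the quadratic Lyapunov function $V(z) = z^T P z$ already introduced in (\ref{eq:general}), with $z$ as the state and $e_\Delta$ as the input. Since $P>0$, the sandwich bound $\lambda_{\min}(P)\|z\|^2 \le V(z) \le \lambda_{\max}(P)\|z\|^2$ supplies the class-$\mathcal{K}_\infty$ functions $\alpha_1,\alpha_2$ for free, so the whole argument reduces to estimating $\dot V$ along (\ref{eq:cl_case2}).

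First I would substitute the robust controller (\ref{eq:case2}) into the disturbance term (\ref{eq:delta}); the factor $A(\xi)$ cancels $A^{-1}(\xi)$, leaving
\begin{align*}
\delta = -\tilde B^T P z\,\|L(\xi)\|^2 - \|\widehat\Delta(t)\|\,\|L(\xi)\|\,{\rm sign}(\tilde B^T P z) + \Delta b(\xi).
\end{align*}
Writing $w \triangleq \tilde B^T P z$ and inserting this into (\ref{eq:general}), together with $2 z^T P \tilde B \delta = 2 w^T \delta$, gives
\begin{align*}
\dot V = -\|z\|^2 - 2\|w\|^2\|L(\xi)\|^2 - 2\|\widehat\Delta(t)\|\,\|L(\xi)\|\,w^T{\rm sign}(w) + 2 w^T \Delta b(\xi).
\end{align*}
The identity $w^T{\rm sign}(w) = \|w\|_1$ turns the sign term into $-2\|\widehat\Delta\|\,\|L\|\,\|w\|_1$, while property 3 and the bound (\ref{eq:upb}) give $2 w^T \Delta b \le 2\|w\|\,\|\Delta\|\,\|L\|$.

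The crux is to make the unknown estimate $\widehat\Delta$ disappear from the final bound. For this I would invoke property 2 in the form $\|\Delta\| \le \|\widehat\Delta\| + \|e_\Delta\|$ and then property 1, $\|w\| \le \|w\|_1$, so that
\begin{align*}
-2\|\widehat\Delta\|\,\|L\|\,\|w\|_1 + 2\|w\|\,\|\widehat\Delta\|\,\|L\| \le 0,
\end{align*}
leaving only $\dot V \le -\|z\|^2 - 2\|w\|^2\|L\|^2 + 2\|w\|\,\|e_\Delta\|\,\|L\|$. Completing the square in the nonnegative scalar $\|w\|\,\|L\|$ bounds the last two terms by $\tfrac12\|e_\Delta\|^2$, yielding
\begin{align*}
\dot V \le -\|z\|^2 + \tfrac12 \|e_\Delta\|^2.
\end{align*}
With $\mu(\|z\|) = \|z\|^2 \in \mathcal{K}_\infty \cap C^1$ and $\Omega(\|e_\Delta\|) = \tfrac12\|e_\Delta\|^2 \in \mathcal{K}_\infty$, the Remark form of Theorem \ref{thm:iss} then delivers ISS from $e_\Delta$ to $z$. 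The step I expect to be delicate is precisely this cancellation: it works only because the sign-feedback term produces a $1$-norm that dominates the worst-case $2$-norm bound through property 1, and because the designed gain is built from $\|\widehat\Delta\|$ rather than the true, unknown $\|\Delta\|$ --- had the feedback used the exact $\Delta$ there would be no residual input and the statement would collapse to plain asymptotic stability, which is exactly the modularity we want to avoid here.
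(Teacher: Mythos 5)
Your proposal is correct and follows essentially the same route as the paper's own proof: the same substitution of (\ref{eq:case2}) into (\ref{eq:delta}), the same use of $w^T{\rm sign}(w)=\|w\|_1$, the norm inequalities $\|w\|\le\|w\|_1$ and $\|\Delta\|-\|\widehat\Delta\|\le\|e_\Delta\|$, and the same completion of the square leading to $\dot V\le -\|z\|^2+\tfrac12\|e_\Delta\|^2$. The only cosmetic difference is that you close with the dissipation-form characterization from the Remark, whereas the paper rewrites the bound as $\dot V\le-\tfrac12\|z\|^2$ for $\|z\|\ge\|e_\Delta\|>0$ and cites Theorem \ref{thm:iss} directly; the two are equivalent as the paper itself notes.
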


\begin{proof}
By substitution (\ref{eq:case2}) into (\ref{eq:delta}), we obtain
$\delta=-\tilde B ^TP z\|L(\xi)\|^2 -
\|\widehat{\Delta}(t)\|\|L(\xi)\|{\rm{sign} }  (\tilde B ^TP z) +
\Delta b(\xi(t))$. We consider $V(z) =z^TPz$ as a Lyapunov
function for the error dynamics (\ref{eq:dyn_z}), where $P>0$ is a
solution of (\ref{eq:lyap}). We can derive that
\begin{align}
\label{eq:storage} {\lambda}_{\min}(P)\|z\|^2\leq V(z)\leq
{\lambda}_{\max}(P)\|z\|^2,
\end{align}
where ${\lambda}_{\min}(P)>0$,  ${\lambda}_{\max}(P)>0$ denote
respectively the minimum and the maximum eigenvalues of the matrix
$P$. Then, from (\ref{eq:general}), we obtain
\begin{align*}
\dot{V}
~=&~ -\|z\|^2 + 2 z^T P \tilde B \Delta b(\xi(t))\\
&~-2\|z^T P \tilde B\|^2\|L(\xi)\|^2-2\|z^T P \tilde B\|_1\|\widehat{\Delta }(t)\|\|L(\xi)\|.
\end{align*}
Since $\|z^T P \tilde B\|~\leq~\|z^T P \tilde B\|_1$, we have
\begin{align*}
\dot{V}
~\leq&~ -\|z\|^2 + 2 z^T P \tilde B \Delta b(\xi(t))\\
&~-2\|z^T P \tilde B\|^2\|L(\xi)\|^2-2\|z^T P \tilde B\|\|\widehat{\Delta }(t)\|\|L(\xi)\|.
\end{align*}
Then based on the assumption (\ref{eq:upb}) and the fact that $z^T
P \tilde B \Delta b(\xi(t)) \leq \|z^T P \tilde B\|\| \Delta
b(\xi(t))\|$, we obtain
\begin{align*}
\dot{V}
~\leq&~ -\|z\|^2 + 2 \|z^T P \tilde B\| \|\Delta \|\|L(\xi)\|\\
&~-2\|z^T P \tilde B\|^2\|L(\xi)\|^2-2\|z^T P \tilde B\|\|\widehat{\Delta }(t)\|\|L(\xi)\|\\
~=&~-\|z\|^2-2\|z^T P \tilde B\|^2\|L(\xi)\|^2\\
&~+ 2 \|z^T P \tilde B\| \|L(\xi)\|(\|\Delta \|-\|\widehat{\Delta }(t)\|).
\end{align*}
Because $\|\Delta\|-\|\widehat{\Delta }(t)\|\leq\|e_\Delta\|$, we obtain
\begin{align*}
\dot{V}
~\leq&~-\|z\|^2-2\|z^T P \tilde B\|^2\|L(\xi)\|^2\\
&~+ 2 \|z^T P \tilde B\| \|L(\xi)\|\|e_\Delta\|.
\end{align*}
Further, we can obtain
\begin{align*}
\dot{V}&~\leq~-\|z\|^2 -2(\|z^T P \tilde B\|\|L(\xi)\|-\frac{1}{2}\|e_\Delta \|)^2 + \frac{1}{2}\|e_\Delta\|^2\\
&~\leq~-\|z\|^2+ \frac{1}{2}\|e_\Delta\|^2.
\end{align*}
Thus, we have the following relation
\begin{align*}
\dot{V}\leq-\frac{1}{2}\|z\|^2,\quad \forall \|z\|\geq \|e_\Delta\|>0,
\end{align*}
Then from (\ref{eq:storage}), we obtain that system (\ref{eq:cl_case2}) is ISS from input $e_\Delta$ to state $z$ as guaranteed by Theorem \ref{thm:iss}.
\end{proof}


\subsection{Multi-parametric ES-based uncertainties estimation}
\label{sec:ES}
Let us define now the following cost function
\begin{align}
\label{eq:cost_gen}
J(\widehat{\Delta}) ~=~ F(z(\widehat{\Delta}))
\end{align}
where $F : \mathbb{R}^{n} \rightarrow \mathbb{R}$, $F(0) = 0$, $F(z)>0$ for $z\neq0$. We need the following assumptions on $J$.

\begin{assumption}
\label{asp:cost}
The cost function $J$ has a local minimum at $\widehat{\Delta}^* ~=~\Delta$.
\end{assumption}

\begin{assumption}
\label{asp:local}
The initial error $e_\Delta(t_0)$ is sufficiently small, i.e. the original parameter estimate vector or matrix $\widehat{\Delta}$ are close enough to the actual parameter vector or matrix $\Delta$.
\end{assumption}

\begin{assumption}
\label{asp:Lips}
The cost function $J$ is analytic and its variation with respect to the uncertain parameters is bounded in the neighborhood of $\widehat{\Delta}^*$, i.e. $\|\frac{\partial J}{\partial \widehat{\Delta}}(\tilde \Delta)\|\leq \xi_2$, $\xi_2>0$,
$\tilde{\Delta}\in\mathcal{V}(\widehat{\Delta}^*)$, where $\mathcal{V}(\widehat{\Delta}^*)$ denotes a compact neighborhood of $\widehat{\Delta}^*$.
\end{assumption}

\begin{remark}
Assumption \ref{asp:cost} simply states that the cost function $J$ has at least a local minimum at the true values of the uncertain parameters.
\end{remark}

\begin{remark}
Assumption \ref{asp:local} indicates that out results will be of local nature, i.e. our analysis holds in a small neighborhood of the actual values of the uncertain parameters.
\end{remark}

We can now state the following result.
\begin{lemma}
\label{lem:case1} Consider the system (\ref{eq:model}) with the
cost function (\ref{eq:cost_gen}), under Assumptions 1-8 and the
assumption that $\Delta b(\xi(t)) = [\Delta_1,\dots,\Delta_m]^T$,
with the feedback controller (\ref{eq:full}), where $u_n$ is given
by (\ref{eq:norm}) and $u_r$ is given by (\ref{eq:case2}), and
$\widehat{\Delta}(t)$ is estimated through the MES algorithm
\begin{align}
\label{eq:mes}
\dot{x}_i~&=~a_i \sin(\omega_i t+ \frac{\pi}{2})J(\widehat{\Delta})\notag\\
\widehat{\Delta}_i(t)~&=~x_i+a_i\sin(\omega_i t - \frac{\pi}{2}), \quad i\in\{1,2,\dots,m\}
\end{align}
with $\omega_i\neq \omega_j$, $\omega_i + \omega_j\neq\omega_k$,
$i,j,k\in\{1,2,\dots,m\}$, and $\omega_i>\omega^*$, $\forall
~i\in\{1,2,\dots,m\}$, with $\omega^*$ large enough, ensures that
the norm of the error vector $z(t)$ admits the following bound
\begin{align*}
\|z(t)\|&\leq\beta(\|z(0)\|,t)
+ \gamma(\tilde{\beta}(\|e_\Delta(0)\|,t)+\|e_\Delta\|_{\max})
\end{align*}
where $\|e_\Delta\|_{\max} = \frac{\xi_1}{\omega_0} + \sqrt{\sum_{i = 1}^m a_i^2}$, $\xi_1>0$, $e_\Delta(0)\in\mathcal{D}_e$, $\omega_0 = \max_{i\in\{1,2,\dots,m\}} \omega_i$, $\beta\in\mathcal{KL}$, $\tilde{\beta}\in\mathcal{KL}$ and $\gamma\in\mathcal{K}$.
\end{lemma}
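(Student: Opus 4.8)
The plan is to decouple the argument into two independent pieces and then compose them: reuse the ISS property already established for the tracking-error dynamics, and carry out an averaging analysis of the extremum-seeking loop to bound the parameter estimation error $e_\Delta(t)$. Since Theorem \ref{thm:case2} shows that the closed loop (\ref{eq:cl_case2}) is ISS from $e_\Delta$ to $z$, Definition \ref{def:ISS} immediately yields
\begin{align*}
\|z(t)\|\leq\beta(\|z(0)\|,t)+\gamma\Big(\sup_{0\leq\tau\leq t}\|e_\Delta(\tau)\|\Big),
\end{align*}
so the whole lemma reduces to producing a practical-convergence bound of the form $\|e_\Delta(t)\|\leq\tilde\beta(\|e_\Delta(0)\|,t)+\|e_\Delta\|_{\max}$ for the MES estimator and substituting it inside $\gamma$.

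For the estimation error I would first rewrite (\ref{eq:mes}) using $\sin(\omega_i t+\tfrac{\pi}{2})=\cos(\omega_i t)$ and $\sin(\omega_i t-\tfrac{\pi}{2})=-\cos(\omega_i t)$, so that $\widehat{\Delta}_i(t)=x_i(t)-a_i\cos(\omega_i t)$ and $\dot x_i=a_i\cos(\omega_i t)\,J(\widehat{\Delta})$. The estimation error then splits as $e_{\Delta,i}=(\Delta_i-x_i)+a_i\cos(\omega_i t)$, where the second summand is the residual dither and the first is the slow drift I need to control. Treating the $x_i$ as slow variables, I would Taylor-expand $J$ about $x$ using the analyticity and gradient bound of Assumption \ref{asp:Lips} and average the right-hand side over the dither period; the frequency conditions $\omega_i\neq\omega_j$ and $\omega_i+\omega_j\neq\omega_k$ guarantee that all cross terms $\langle\cos(\omega_i t)\cos(\omega_j t)\rangle$ vanish except the diagonal ones, leaving the averaged gradient flow $\dot{\bar x}_i\approx-\tfrac{a_i^2}{2}\,\partial J/\partial\widehat{\Delta}_i$ evaluated at $\bar x$. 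Because $J$ has a local minimum at $\widehat{\Delta}^*=\Delta$ (Assumption \ref{asp:cost}) and the initialization is close to $\Delta$ (Assumption \ref{asp:local}), this averaged system is locally asymptotically stable at $\bar x=\Delta$, which furnishes the $\mathcal{KL}$ estimate $\tilde\beta(\|e_\Delta(0)\|,t)$ for the slow component.

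The final step is to quantify the two residual terms. The standard averaging theorem, applied on the compact invariant neighborhood $\mathcal{V}(\widehat{\Delta}^*)$ guaranteed by Assumption \ref{asp:local}, bounds the gap between the true slow trajectory $x(t)$ and its average by a constant over the largest frequency, producing the $\xi_1/\omega_0$ term with $\xi_1$ derived from the uniform gradient bound $\xi_2$ of Assumption \ref{asp:Lips}; meanwhile the residual dither $a_i\cos(\omega_i t)$ contributes at most $a_i$ to each component, and collecting these in the Euclidean norm yields the $\sqrt{\sum_{i=1}^m a_i^2}$ term. Adding both to the slow $\mathcal{KL}$ bound gives $\|e_\Delta(t)\|\leq\tilde\beta(\|e_\Delta(0)\|,t)+\xi_1/\omega_0+\sqrt{\sum_{i=1}^m a_i^2}$, and inserting this into the ISS estimate above delivers the claimed inequality.

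I expect the main obstacle to be the averaging step itself: making the $O(1/\omega_0)$ closeness rigorous requires controlling the higher-order terms of the Taylor expansion of $J$ uniformly in time, which is exactly where the analyticity in Assumption \ref{asp:Lips} and the incommensurability conditions on the $\omega_i$ are essential, and where one must verify that the trajectory never leaves the neighborhood on which the expansion and the local-minimum structure remain valid. A secondary technical point is reconciling the supremum of $e_\Delta$ in the ISS definition with the pointwise $\mathcal{KL}$ bound on the estimation error; since $\tilde\beta(\cdot,\tau)$ is nonincreasing in $\tau$, the transient of the input estimate decays and can be folded into the $\mathcal{KL}$ structure of the composite bound, leaving the constant $\|e_\Delta\|_{\max}$ as the ultimate gain.
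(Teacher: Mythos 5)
Your proposal follows essentially the same route as the paper: both first invoke the ISS property from Theorem \ref{thm:case2} to reduce the lemma to a practical-convergence bound of the form $\|e_\Delta(t)\|\le\tilde\beta(\|e_\Delta(0)\|,t)+\xi_1/\omega_0+\sqrt{\textstyle\sum_i a_i^2}$ on the MES loop, obtained by splitting $e_\Delta$ into a slow component and the residual dither $d(t)$. The only difference is that the paper imports this estimation-error bound directly from the averaging results of \cite{Rote00} (and \cite{BA13}), whereas you sketch the underlying averaging derivation yourself; the decomposition and the resulting three terms match the paper's exactly.
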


\begin{proof}
Based on Theorem \ref{thm:case2}, we know that the tracking error
dynamics (\ref{eq:cl_case2}) is ISS from the input $e_\Delta(t)$
to the state $z(t)$. Thus, by Definition \ref{def:ISS}, there
exist a class $\mathcal{KL}$ function $\beta$ and a class
$\mathcal{K}$ function $\gamma$ such that for any initial state
$z(0)$, any bounded input $e_\Delta(t)$ and any $t\geq0$,
\begin{align}
\label{eq:bound}
\|z(t)\|\leq \beta(\|z(0)\|,t) + \gamma(\sup_{0\leq\tau\leq t} \|e_\Delta(\tau)\|).
\end{align}
Now, we need to evaluate the bound on the estimation vector
$\widehat{\Delta}(t)$, to do so we use the results presented in
\cite{Rote00}. First, based on Assumption \ref{asp:Lips}, the cost
function is locally Lipschitz, i.e. there exists $\eta_1>0$ such
that $|J(\Delta_1)-J(\Delta_2)|\leq\eta_1\|\Delta_1 -\Delta_2\|$,
for all $\Delta_1, \Delta_2 \in\mathcal{V}(\widehat{\Delta}^*)$.
Furthermore, since $J$ is analytic, it can be approximated locally
in $\mathcal{V}(\widehat{\Delta}^*)$ by a quadratic function, e.g.
Taylor series up to the second order. Based on this and on
Assumptions \ref{asp:cost} and \ref{asp:local}, we can obtain the
following bound (\cite[p. 436-437]{Rote00},\cite{BA13})
\begin{align*}
\|e_\Delta(t)\| - \|d(t)\| \leq \|e_\Delta(t)- d(t)\|\leq\tilde{\beta}(\|e_\Delta(0),t\|) + \frac{\xi_1}{\omega_0},
\end{align*}
where $\tilde{\beta}\in\mathcal{KL}$, $\xi_1>0$, $t\geq0$,
$\omega_0 = \max_{i\in\{1,2,\dots,m\}} \omega_i$, and $d(t) =
[a_1\sin(\omega_1 t + \frac{\pi}{2}),\dots,a_m\sin(\omega_m
t+\frac{\pi}{2})]^T$. We can further obtain that
\begin{align*}
\|e_\Delta(t)\|&\leq~ \tilde{\beta}(\|e_\Delta(0),t\|) + \frac{\xi_1}{\omega_0} + \|d(t)\|\\
&\leq~\tilde{\beta}(\|e_\Delta(0),t\|) + \frac{\xi_1}{\omega_0} + \sqrt{\sum_{i = 1}^m a_i^2}.
\end{align*}
Together with (\ref{eq:bound}) yields the desired result.
\end{proof}

\section{The case of nonlinear system model with time dependent
uncertainties} We consider here affine uncertain nonlinear systems
of the form:
\begin{equation}
\begin{array}{l}
\displaystyle \dot{x} ~=~ f(x) + \Delta f(t,x) + g(x)u\\
\displaystyle y~=~ h(x),
\end{array}
\label{eq:model}
\end{equation}
where $x\in\mathbb{R}^n$, $u\in \mathbb{R}^p$, $y\in\mathbb{R}^m$
($p~\geq~ m$), represent respectively the state, the input and the
controlled output vectors, $\Delta f(t,x)$ is a vector field
representing additive model uncertainties. The vector fields $f$,
$\Delta f$, columns of $g$ and function $h$ satisfy the following
assumptions.

\begin{assumption}
The function $f:\mathbb{R}^n \rightarrow \mathbb{R}^n$ and the
columns of $g: \mathbb{R}^n \rightarrow \mathbb{R}^p$ are
$\mathbb{C}^{\infty}$ vector fields on a bounded set $X$ of
$\mathbb{R}^n$ and $h : \mathbb{R}^n \rightarrow \mathbb{R}^m$ is
a $\mathbb{C}^{\infty}$ vector on $X$. The vector field $\Delta f
$ is $\mathbb{C}^1$ on $X$.
\end{assumption}

\begin{assumption}
System (\ref{eq:model}) has a well-defined (vector) relative
degree $\{r_1,~r_2,~\cdots,~r_m\}$ at each point $x^0\in X$, and
the system is linearizable, i.e. $\sum_{i=1}^m r_i=n$.
\end{assumption}


\begin{assumption}
The desired output trajectories $y_{id}$ ($1\leq i \leq m$) are
smooth functions of time, relating desired initial points
$y_{id}(0)$ at $t=0$ to desired final points $y_{id}(t_f)$ at
$t=t_f$.
\end{assumption}

\subsection{Control objectives}
Our objective is to design a state feedback controller $u(x)$ so
that  for the uncertain nonlinear model (\ref{eq:model}) the
tracking error is uniformly bounded. We stress here that the goal
of parameters tuning is not for stabilization but for performance
optimization. To achieve the control objective, we proceed as
follows. First, we design a robust controller which can guarantee
the input-to-state stability (ISS) of the tracking error dynamics
w.r.t the estimation errors input. Then, we combine this
controller with a model-free extremum-seeking algorithm to
iteratively tune the uncertain parameters, to optimize online a
desired performance cost function.

\subsection{Nominal Controller Design} \label{sec:norm}
Under Assumption 2 and nominal conditions, i.e. when $\Delta
f(t,x) = 0$, system (\ref{eq:model}) can be written as, e.g.
\cite{I89}
\begin{align}
\label{eq:linear} y^{(r)}(t) ~=~ b(\xi(t)) + A (\xi(t)) u(t),
\end{align}
where
\begin{equation}
\begin{array}{l}
\displaystyle y^{(r)}(t) ~=~ [y^{(r_1)}_{1}(t), ~y^{(r_2)}_{2}(t),~\cdots,~ y^{(r_m)}_{m}(t)]^T\\
\displaystyle \xi(t) ~=~ [\xi^1(t),~\cdots,~\xi^m(t)]^T\\
\displaystyle \xi ^i(t) ~=~[y_i(t),~\cdots,~y_i^{(r_i-1)}(t)],
\quad 1\leq i\leq m
\end{array}
\label{eq:linear_c}
\end{equation}
The functions $b(\xi)$, $A(\xi)$ can be written as functions of
$f$, $g$ and $h$, and $A(\xi)$ is non-singular in $\tilde{X}$,
where $\tilde X$ is the image of the set of $X$ by the
diffeomorphism $x\rightarrow\xi$ between the states of system
(\ref{eq:model}) and the linearized model (\ref{eq:linear}).

At this point, we introduce one more assumption on system
(\ref{eq:model}).

\begin{assumption}
The additive uncertainties $\Delta f(t,x)$ in (\ref{eq:model})
appear as additive uncertainties in the input-output linearized
model (\ref{eq:linear})-(\ref{eq:linear_c}) as follows (see also
\cite{Beno09})
\begin{align}
\label{eq:model_un} y^{(r)}(t) ~=~ b(\xi(t)) + A (\xi(t)) u(t) +
\Delta b(t,\xi(t)),
\end{align}
where $\Delta b(t,\xi(t))$ is $\mathbb{C}^1$ on $\tilde X$.
\end{assumption}

If we consider the nominal model (\ref{eq:linear}) first, then we
can define a virtual input vector $v(t)$ as
\begin{align}
\label{eq:input} v(t) ~=~ b(\xi(t)) + A (\xi(t)) u(t).
\end{align}
Combining (\ref{eq:linear}) and (\ref{eq:input}), we can obtain
the following input-output mapping
\begin{align}
\label{eq:virtual} y^{(r)}(t) ~=~v(t).
\end{align}
Based on the linear system (\ref{eq:virtual}), it is
straightforward to apply a stabilizing controller for the nominal
system (\ref{eq:linear}) as
\begin{align}
\label{eq:norm} u_{n} = A^{-1}(\xi) \left[v_{s}(t,\xi) -
b(\xi)\right],
\end{align}
where $v_s$ is a $m\times1$ vector and the $i$-th ($1\leq i \leq
m$) element $v_{si}$ is given by
\begin{align}
\label{eq:feedback} v_{si} = y_{id}^{(r_i)} -
K_{r_i}^{i}(y_i^{(r_i-1)}-y_{id}^{(r_i-1)})-\cdots-K_{1}^{i}(y_i-y_{id}).
\end{align}
Denote the tracking error as $e_i(t) ~\triangleq~ y_i(t)-
y_{id}(t)$, we obtain  the following tracking error dynamics
\begin{align}
\label{eq:error} e_i^{(r_i)}(t) +  K^i_{r_i}e^{(r_i-1)}(t) +
\cdots + K^i_1e_i(t) = 0,
\end{align}
where $i \in \{1,~2,~\cdots,~m\}$. By selecting the gains $K^i_j$
where $i \in \{1,~2,~\cdots,~m\}$ and $j\in\{1,~2,~\cdots,~r_i\}$,
we can obtain global asymptotic stability of the tracking errors
$e_i(t)$. To formalize this condition, we make the following
assumption.

\begin{assumption}
There exists a non-empty set $\mathcal{A}$ where
$K^i_j\in\mathcal{A} $ such that the polynomials in
(\ref{eq:error}) are Hurwitz, where $i \in \{1,~2,~\cdots,~m\}$
and $j\in\{1,~2,~\cdots,~r_i\}$.
\end{assumption}


To this end, we define $z=[z^1,~z^2,~\cdots,~z^m]^T$, where
$z^i=[e_i,~\dot{e_i},~\cdots,~e_i^{(r_i-1)}]$ and
$i\in\{1,~2,\cdots,~m\}$. Then, from (\ref{eq:error}), we can
obtain
\begin{align*}
\dot{z}~=~ \tilde{A} z,
\end{align*}
where $\tilde{A}\in \mathbb{R}^{n\times n}$ is a diagonal block
matrix given by
\begin{align}
\label{eq:tildA}
\tilde{A}~=~{\rm{diag}}\{\tilde A_1,~\tilde A_2,~\cdots,~\tilde
A_m\},
\end{align}
and $\tilde A_i$ ($1\leq i\leq m$) is a $r_i\times r_i$ matrix
given by
\[ \tilde{A}_i~=~\left[ \begin{array}{cccccc}
0& 1&  \\
0& &1&\\
0&&&\ddots\\
\vdots&&&&1\\
-K^i_1\quad&-K^i_2\quad&\cdots\quad&\cdots\quad&-K^i_{r_i}
\end{array} \right].\]
As discussed above, the gains $K_j^i$ can be chosen so that the
matrix $\tilde{A}$ is Hurwitz. Thus, there exists a positive
definite matrix $P~>~ 0$ such that (see e.g. \cite{Khalil02})
\begin{align}
\label{eq:lyap} \tilde A^T P + P \tilde A ~=~-I.
\end{align} 


\subsection{Robust Controller Design} \label{sec:robust}
We now consider the uncertain model (\ref{eq:model}), i.e. when
$\Delta f(t,x)\neq0$. The corresponding linearized model is given
by (\ref{eq:model_un}) where $\Delta b (t,\xi(t))\neq0$. The
global asymptotic stability of the error dynamics (\ref{eq:error})
cannot be guaranteed anymore due to the additive uncertainty
$\Delta b(t,\xi(t))$. We use  Lyapunov reconstruction techniques
to design a new controller so that the tracking error is
guaranteed to be bounded. The new controller for the uncertain
model (\ref{eq:model_un}) is defined as
\begin{align}
\label{eq:full} u_{f}~=~u_{n} + u_{r},
\end{align}
where the nominal controller $u_{n}$ is given by (\ref{eq:norm})
and the robust controller $u_r$ will be given later on based on
particular forms of the uncertainty $\Delta b(t,\xi(t))$. By using
the controller (\ref{eq:full}), from (\ref{eq:model_un}) we obtain
\begin{align}
\label{eq:io_new}
y^{(r)}(t) &~=~ b(\xi(t)) + A (\xi(t)) u_{f} + \Delta b(t,\xi(t))\notag\\
&~=~ b(\xi(t)) + A (\xi(t)) u_{n} + A (\xi(t)) u_{r}+ \Delta b(t,\xi(t))\notag\\
& ~=~v_{s}(t,\xi)+ A (\xi(t)) u_{r}+ \Delta b(t,\xi(t)),
\end{align}
Further, the dynamics for $z$ is given by
\begin{align}
\label{eq:dyn_z} \dot{z}~=~ \tilde{A} z + \tilde{B} \delta,
\end{align}
where $\tilde{A}$ is defined in (\ref{eq:tildA}), $\delta$ is a
$m\times 1$ vector given by
\begin{align}
\label{eq:delta} \delta~=~A (\xi(t)) u_{r}+ \Delta b(t,\xi(t)),
\end{align}
and the matrix $\tilde B\in\mathbb{R}^{n\times m}$ is given by
\begin{align}
\label{eq:tildB} \tilde{B}~=~\left[ \begin{array}{c}
\tilde{B}_1 \\
\tilde{B}_2\\
\vdots\\
\tilde{B}_m
\end{array} \right],
\end{align}
with $\tilde B_i$ ($1\leq i\leq m$) given by a $r_i\times m$
matrix such that
\[ \tilde{B}_i(l,q) ~=~ \left\{ \begin{array}{lll}
         1 & \quad\mbox{if $l = r_i$ and $q = i$}\\
         0  & \quad\mbox{otherwise}\end{array} \right. \]
%
%
If we apply $V(z)=z^TPz$ as a Lyapunov function for the dynamics
(\ref{eq:dyn_z}), where $P$ is the solution of the Lyapunov
equation (\ref{eq:lyap}), then we obtain
\begin{align}
\label{eq:general}
\dot{V}(t)&~=~\frac{\partial V}{\partial z}\dot{z}\notag\\
&~=~z^T(\tilde A^T P + P \tilde A)z + 2 z^T P \tilde B \delta\notag\\
&~=~-\|z\|^2 + 2 z^T P \tilde B \delta,
\end{align}
%
where $\delta$ given by (\ref{eq:delta}) depends on the robust
controller $u_r$.

Next, we will design the controller $u_{r}$ based on the
particular forms of the uncertainties that appear in
(\ref{eq:model_un}), i.e. $\Delta b(t,\xi(t))$. For notational
convenience, the unknown parameter vector/matrix (which may be
time-varying) is denoted by $\Delta(t)$ and the estimate for the
unknowns is denoted by $\widehat{\Delta}(t)$. Further, the
estimation error vector/matrix is given by $e_\Delta(t) =
\Delta(t)-\widehat{\Delta}(t)$. The dimensions of $\Delta$ (and in
turn, $\widehat{\Delta}$ and $e_\Delta$) will be clear from the
context.

\subsection{Case 1: State-Independent Uncertainties}
\label{sec:case1} We consider the case when $\Delta b(t,\xi(t))$
is simply $\Delta(t) $, where $\Delta(t)  =
[\Delta_1(t),\dots,\Delta_m(t)]^T$. Assume that we can obtain the
estimate (e.g. by ES) of the unknown parameters $\Delta_i(t)$,
which may be time-varying and is denoted by $\widehat{\Delta }_i
(t)$, for $i = 1,2,\dots,m$.  Let $\widehat{\Delta }(t) =
[\widehat{\Delta } _1(t),\dots,\widehat{\Delta } _m(t)]^T$.
We use the following robust controller
\begin{align}
\label{eq:case1}
u_{r} = - A^{-1}(\xi) (\tilde B^TPz+ \widehat{\Delta}(t)).
\end{align}
The closed-loop error dynamics can be written as
\begin{align}
\label{eq:cl_case1} \dot{z}~=~f(z,e_\Delta),
\end{align}
where $e_\Delta(t)$ is the input to the system, $z(t)$ represents
the system state and $f$ is given by
\begin{align*}
f(z,e_\Delta)~=~(\tilde{A} - \tilde{B} \tilde B^TP)z + \tilde B
e_\Delta.
\end{align*}
\begin{theorem}
\label{thm:case1} Consider the system (\ref{eq:model}), under
Assumptions 1-5 and the assumption that $\Delta b(t,\xi(t)) =
[\Delta_1(t),\dots,\Delta_m(t)]^T$, with the feedback controller
(\ref{eq:full}), where $u_n$ is given by (\ref{eq:norm}) and $u_r$
is given by (\ref{eq:case1}). Then, the closed-loop system
(\ref{eq:cl_case1}) is ISS from the estimation errors input
$e_\Delta(t)\in\mathbb{R}^m$
 to the tracking errors state $z(t)\in\mathbb{R}^n$.
%
\end{theorem}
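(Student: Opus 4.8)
The plan is to mirror the Lyapunov-based argument used for Theorem \ref{thm:case2}, exploiting the fact that here the uncertainty $\Delta b(t,\xi(t)) = \Delta(t)$ is state independent, which removes the $\|L(\xi)\|$ factors and leaves a cleaner quadratic structure. First I would substitute the robust controller (\ref{eq:case1}) into the expression (\ref{eq:delta}) for $\delta$. Because $A(\xi)$ is invertible and cancels against $A^{-1}(\xi)$, and because $\Delta(t) - \widehat{\Delta}(t) = e_\Delta(t)$, this should collapse to
\begin{align*}
\delta = -\tilde B^T P z + e_\Delta,
\end{align*}
which already explains why the closed-loop dynamics (\ref{eq:cl_case1}) take the stated form $\dot z = (\tilde A - \tilde B \tilde B^T P)z + \tilde B e_\Delta$, linear in $z$ and in $e_\Delta$.

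Next I would take $V(z) = z^T P z$ with $P$ the solution of (\ref{eq:lyap}), so that the two-sided bound $\lambda_{\min}(P)\|z\|^2 \leq V(z) \leq \lambda_{\max}(P)\|z\|^2$ holds and the class-$\mathcal{K}_\infty$ sandwich required by Theorem \ref{thm:iss} is in place. Plugging the expression for $\delta$ into the general identity (\ref{eq:general}) and using $\tilde A^T P + P \tilde A = -I$, I expect
\begin{align*}
\dot V = -\|z\|^2 - 2\|\tilde B^T P z\|^2 + 2 (\tilde B^T P z)^T e_\Delta.
\end{align*}
The cross term is handled by the Cauchy--Schwarz property $x^T y \leq \|x\|\|y\|$ from the Preliminaries, followed by the completion of squares
\begin{align*}
-2\|\tilde B^T P z\|^2 + 2\|\tilde B^T P z\|\|e_\Delta\| = -2\left(\|\tilde B^T P z\| - \tfrac{1}{2}\|e_\Delta\|\right)^2 + \tfrac{1}{2}\|e_\Delta\|^2,
\end{align*}
exactly as in the proof of Theorem \ref{thm:case2}. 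Discarding the nonpositive square yields $\dot V \leq -\|z\|^2 + \tfrac{1}{2}\|e_\Delta\|^2$, hence $\dot V \leq -\tfrac{1}{2}\|z\|^2$ whenever $\|z\| \geq \|e_\Delta\| > 0$. Invoking Theorem \ref{thm:iss} with $\rho(r) = r$ and $W(z) = \tfrac{1}{2}\|z\|^2$ then delivers ISS from $e_\Delta$ to $z$.

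I do not anticipate a serious obstacle: the argument is essentially the state-independent specialization of Theorem \ref{thm:case2}, and the only point requiring care is the algebraic substitution that makes the $\widehat{\Delta}(t)$ term cancel against the true parameter $\Delta(t)$ to produce precisely $e_\Delta$. The one conceptual subtlety worth flagging is that $\Delta(t)$ is now permitted to be time varying, so I would confirm that ISS in the sense of Definition \ref{def:ISS} remains the correct target; since the Lyapunov estimate above holds pointwise in $t$ and never differentiates $\Delta(t)$, the time dependence enters only through the input signal $e_\Delta(t)$ and causes no difficulty.
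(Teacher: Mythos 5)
Your proof is correct and is precisely the argument the paper intends: the source actually omits an explicit proof of Theorem \ref{thm:case1}, but your substitution $\delta=-\tilde B^TPz+e_\Delta$, the Lyapunov function $V=z^TPz$ with (\ref{eq:lyap}), the Cauchy--Schwarz bound, the completion of squares, and the appeal to Theorem \ref{thm:iss} with $\rho(r)=r$ and $W(z)=\tfrac{1}{2}\|z\|^2$ exactly mirror the printed proof of the analogous Theorem \ref{thm:case2}. No gaps.
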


\subsection{Case2: State-Dependent Uncertainties}
\label{sec:case2} We consider the second case when $\|\Delta b
(t,\xi(t))\|$ is upper bounded by a function of the state
$\xi(t)$, i.e.
\begin{align}
\label{eq:upb} \|\Delta
b(t,\xi(t))\|~\leq~\|\Delta(t)\|\|L(\xi)\|,
\end{align}
where $\Delta(t)\in\mathbb{R}^{m\times m}$ and $L(\xi)$ is a known
bounded function. Assume that we can obtain the estimate (e.g. by
ES) for $\Delta(i,j)$, which may be time-varying and is denoted by
$\widehat{\Delta}(i,j)$, for $i,j = 1,2,\dots,m$. Let
$\widehat{\Delta}(t)$
 be the matrix with the element $\widehat{\Delta}(i,j)$.
 We use the following robust controller
\begin{align}
\label{eq:case2}
u_{r}=&~-A^{-1}(\xi)\tilde B ^TP z\|L(\xi)\|^2 \notag\\
&~-A^{-1}(\xi) \|\widehat{\Delta}(t)\|\|L(\xi)\|{\rm{sign} }
(\tilde B ^TP z).
\end{align}
Similar to the previous case, the closed-loop error dynamics can
be written in the form of
\begin{align}
\label{eq:cl_case2} \dot{z}~=~f(t,z,e_\Delta),
\end{align}
where $e_\Delta(t)$ is the system input and $z(t)$ is the system
state.

\begin{theorem}
\label{thm:case2}
Consider the system (\ref{eq:model}), under Assumptions 1-5 and
the assumption that $\Delta b(t,\xi(t))$ satisfies (\ref{eq:upb}),
with the feedback controller (\ref{eq:full}), where $u_n$ is given
by (\ref{eq:norm}) and $u_r$ is given by (\ref{eq:case2}). Then,
the closed-loop system (\ref{eq:cl_case2}) is ISS from the
estimation errors input $e_\Delta(t)\in\mathbb{R}^{m\times m}$ to
the tracking errors state $z(t)\in\mathbb{R}^n$.
\end{theorem}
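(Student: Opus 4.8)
The plan is to reproduce, essentially verbatim, the Lyapunov computation already carried out for the constant-parameter state-dependent case, since the robust controller (\ref{eq:case2}) and the quadratic storage function are unchanged; the only genuinely new feature is that $\Delta(t)$, and hence $\Delta b(t,\xi(t))$, now carry an explicit time dependence, so the real task is to check that every pointwise estimate survives this dependence. First I would substitute the robust controller (\ref{eq:case2}) into the expression (\ref{eq:delta}) for $\delta$, obtaining $\delta = -\tilde B^T P z\,\|L(\xi)\|^2 - \|\widehat\Delta(t)\|\,\|L(\xi)\|\,\mathrm{sign}(\tilde B^T P z) + \Delta b(t,\xi(t))$. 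Taking $V(z)=z^T P z$ with $P$ the solution of (\ref{eq:lyap}), I would use (\ref{eq:general}) to write $\dot V = -\|z\|^2 + 2 z^T P \tilde B\,\delta$ and expand the three contributions of $\delta$ separately.

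The feedback term yields $-2\|z^T P\tilde B\|^2\|L(\xi)\|^2$, and the signum term yields $-2\|\widehat\Delta(t)\|\,\|L(\xi)\|\,z^T P\tilde B\,\mathrm{sign}(\tilde B^T P z)$, which I would collapse using the identity $x^T\mathrm{sign}(x)=\|x\|_1$ with $x=\tilde B^T P z$ and then the norm inequality $\|\cdot\|_1\geq\|\cdot\|$ (norm property~1) to bound it by $-2\|z^T P\tilde B\|\,\|\widehat\Delta(t)\|\,\|L(\xi)\|$. For the drift term I would apply Cauchy--Schwarz (norm property~3), $2z^T P\tilde B\,\Delta b(t,\xi(t))\leq 2\|z^T P\tilde B\|\,\|\Delta b(t,\xi(t))\|$, and invoke the standing hypothesis (\ref{eq:upb}), now read pointwise in $t$, to replace $\|\Delta b(t,\xi(t))\|$ by $\|\Delta(t)\|\,\|L(\xi)\|$.

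Collecting the three pieces gives $\dot V \leq -\|z\|^2 - 2\|z^T P\tilde B\|^2\|L(\xi)\|^2 + 2\|z^T P\tilde B\|\,\|L(\xi)\|\,(\|\Delta(t)\|-\|\widehat\Delta(t)\|)$. Since $e_\Delta(t)=\Delta(t)-\widehat\Delta(t)$, the reverse triangle inequality (norm property~2) yields $\|\Delta(t)\|-\|\widehat\Delta(t)\|\leq\|e_\Delta(t)\|$ at each instant, so the cross term is dominated by $2\|z^T P\tilde B\|\,\|L(\xi)\|\,\|e_\Delta\|$. Completing the square in the scalar quantity $\|z^T P\tilde B\|\,\|L(\xi)\|$ then exactly absorbs the negative quadratic feedback term and leaves $\dot V \leq -\|z\|^2 + \tfrac12\|e_\Delta\|^2$, hence $\dot V \leq -\tfrac12\|z\|^2$ whenever $\|z\|\geq\|e_\Delta\|>0$. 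Together with the sandwich bound $\lambda_{\min}(P)\|z\|^2\leq V\leq\lambda_{\max}(P)\|z\|^2$, Theorem~\ref{thm:iss} then delivers ISS of (\ref{eq:cl_case2}) from the input $e_\Delta$ to the state $z$.

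The only place where time dependence could have interfered is the chain running from (\ref{eq:upb}) through the reverse-triangle step to the square completion, so the main point to verify is that each of these holds instantaneously rather than only in some averaged or constant sense; I expect this to be the sole subtlety, and it is benign. Indeed, the storage function $V(z)=z^TPz$ is time-independent, the bound (\ref{eq:upb}) is assumed to hold for all $t$, and $\|\Delta(t)\|-\|\widehat\Delta(t)\|\leq\|e_\Delta(t)\|$ holds for each fixed $t$ irrespective of how $\Delta$ evolves. Consequently the (now explicitly time-dependent) vector field $f(t,z,e_\Delta)$ of (\ref{eq:cl_case2}) satisfies precisely the time-dependent dissipation inequality required by Theorem~\ref{thm:iss}, with the time-varying signal $e_\Delta(t)$ playing the role of the exogenous input, and no uniform-in-time obstruction arises. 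I would therefore expect the argument to close with no new difficulty beyond carrying the extra $t$-arguments through the bookkeeping.
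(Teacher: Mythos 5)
Your proposal is correct and follows essentially the same Lyapunov reconstruction argument the paper gives for the constant-parameter version of this theorem: the same substitution of (\ref{eq:case2}) into (\ref{eq:delta}), the same use of $x^T\mathrm{sign}(x)=\|x\|_1$ together with $\|\cdot\|\leq\|\cdot\|_1$, the bound (\ref{eq:upb}), the reverse triangle inequality, and the completion of the square leading to $\dot V\leq-\frac{1}{2}\|z\|^2$ for $\|z\|\geq\|e_\Delta\|>0$ and hence ISS via Theorem~\ref{thm:iss}. Your added observation that every estimate holds pointwise in $t$, so the time dependence of $\Delta(t)$ is harmless, is exactly why the paper omits a separate proof for this case.
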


\subsection{Case 3: Sum of a State-dependent Term and a Time-dependent Term}
\label{sec:case3} We consider the third case when $\Delta b
(t,\xi(t))$ is composed of a  state-dependent term and a
time-dependent term, i.e.
\begin{align}
\label{eq:affine} \Delta b (t,\xi(t))~=~\Delta(t) (Q(\xi) +
\eta(t)),
\end{align}
where $\Delta(t)\in\mathbb{R}^{m\times m}$, $Q(\xi)$ is a known
bounded function, the vector $\eta(t)$ is unknown but the upper
bound for $\|\eta(t)\|$ is known to be $C_1$, i.e.
$\|\eta(t)\|\leq C_1$. Assume that we can obtain the estimate
(e.g. by ES) for $\Delta(i,j)$, which may be time-varying and is
denoted by $\widehat{\Delta}(i,j)$, for $i,j = 1,2,\dots,m$. Let
$\widehat{\Delta}(t)$
 be the matrix with the element $\widehat{\Delta}(i,j)$ that locates at the $i$-th row and $j$-th column.
 We use the following robust controller
\begin{align}
\label{eq:case3}
u_{r}~=~&-A^{-1}(\xi)[\tilde B^TPz\|Q(\xi)\|^2 +\widehat{\Delta}(t)\times Q(\xi)\notag\\
 &+\|\widehat \Delta(t)\| C_1{\rm{sign} }  (\tilde B^TPz) +\tilde B^TPzC_1^2].
\end{align}
Similar to the previous two cases, the closed-loop error dynamics
can be written in the following form
\begin{align}
\label{eq:cl_case3} \dot{z}~=~F(t,z,e_\Delta),
\end{align}
where $e_\Delta(t)$ is the system input and $z(t)$ is the system
state.

\begin{theorem}
\label{thm:case3} Consider the system (\ref{eq:model}), under
Assumptions 1-5 and the assumption that $\Delta b(t,\xi(t))$
satisfies (\ref{eq:affine}), with the feedback controller
(\ref{eq:full}), where $u_n$ is given by (\ref{eq:norm}) and $u_r$
is given by (\ref{eq:case3}). Then, the closed-loop system
(\ref{eq:cl_case3}) is ISS from the estimation errors input
$e_\Delta(t)\in\mathbb{R}^{m\times m}$ to the tracking errors
state $z(t)\in\mathbb{R}^n$.
\end{theorem}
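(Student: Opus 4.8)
The plan is to mirror the Lyapunov computation used in the preceding ISS proofs, taking $V(z) = z^T P z$ with $P$ solving the Lyapunov equation $\tilde A^T P + P\tilde A = -I$, so that along the closed loop $\dot V = -\|z\|^2 + 2 z^T P \tilde B \delta$ with $\delta = A(\xi)u_r + \Delta b(t,\xi)$. The strategy is to exploit the structure of $u_r$ in (\ref{eq:case3}) so that every term produced by $\delta$ is either absorbed by a matching feedback term or bounded by a quantity proportional to $\|e_\Delta\|$. First I would introduce the shorthand $w \triangleq \tilde B^T P z \in \mathbb{R}^m$, so that $2 z^T P \tilde B \delta = 2 w^T \delta$, and substitute both $\Delta b = \Delta(t)\big(Q(\xi) + \eta(t)\big)$ from (\ref{eq:affine}) and $u_r$ from (\ref{eq:case3}). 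This produces six contributions to $2w^T\delta$: the two quadratic feedback terms $-2\|w\|^2\|Q(\xi)\|^2$ and $-2\|w\|^2 C_1^2$; the pair $-2 w^T \widehat\Delta(t) Q(\xi) + 2 w^T \Delta(t) Q(\xi)$; the signum term $-2\|\widehat\Delta(t)\| C_1 \|w\|_1$ (using the identity $w^T {\rm sign}(w) = \|w\|_1$); and the residual $2 w^T \Delta(t)\eta(t)$.

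Next I would simplify these using the decomposition $\Delta(t) = \widehat\Delta(t) + e_\Delta(t)$. The two $Q$-terms collapse to $2 w^T e_\Delta(t) Q(\xi) \le 2\|w\|\,\|e_\Delta\|\,\|Q(\xi)\|$. For the $\eta$-term I would split $2 w^T \Delta\eta = 2 w^T \widehat\Delta \eta + 2 w^T e_\Delta \eta$: the first piece is controlled by $2\|w\|\,\|\widehat\Delta\|\,\|\eta\| \le 2\|w\|_1\,\|\widehat\Delta\|\, C_1$ (using $\|\eta\|\le C_1$ and $\|w\|\le\|w\|_1$), which is exactly dominated by the signum feedback term so their sum is nonpositive, while the second piece is bounded by $2\|w\|\,\|e_\Delta\|\, C_1$. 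Collecting everything yields
\[ \dot V \le -\|z\|^2 - 2\|w\|^2\big(\|Q(\xi)\|^2 + C_1^2\big) + 2\|w\|\,\|e_\Delta\|\big(\|Q(\xi)\| + C_1\big). \]

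Finally I would close the square in the variable $\|w\|$. The step I expect to be the main obstacle — and the feature that genuinely distinguishes this case from the purely state-dependent one — is the mismatch between the quadratic feedback coefficient $\|Q(\xi)\|^2 + C_1^2$ and the linear cross-term coefficient $\|Q(\xi)\| + C_1$, so the square no longer completes as cleanly. I would resolve this with the elementary inequality $\big(\|Q(\xi)\| + C_1\big)^2 \le 2\big(\|Q(\xi)\|^2 + C_1^2\big)$ (equivalently $2\|Q(\xi)\| C_1 \le \|Q(\xi)\|^2 + C_1^2$); applying Young's inequality to the cross term against the full quadratic term then gives $-2\|w\|^2(\|Q\|^2+C_1^2) + 2\|w\|\,\|e_\Delta\|(\|Q\|+C_1) \le \|e_\Delta\|^2$, a bound uniform in $\xi$ that also covers the degenerate case $\|Q(\xi)\| = C_1 = 0$. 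Hence $\dot V \le -\|z\|^2 + \|e_\Delta\|^2$, so that $\dot V \le -\tfrac12\|z\|^2$ whenever $\|z\| \ge \sqrt{2}\,\|e_\Delta\| > 0$. Combining this with the sandwich bound $\lambda_{\min}(P)\|z\|^2 \le V(z) \le \lambda_{\max}(P)\|z\|^2$ and invoking Theorem \ref{thm:iss} with the class-$\mathcal{K}$ gain $\rho(\|e_\Delta\|) = \sqrt{2}\,\|e_\Delta\|$ establishes ISS from the estimation error input $e_\Delta$ to the tracking error state $z$.
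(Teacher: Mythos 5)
Your proposal is correct, and it follows essentially the same Lyapunov-reconstruction template the paper uses for its Theorems on the other uncertainty cases (the paper itself omits the proof of this particular theorem): substitute $u_r$ into $\delta$, cancel the $\widehat{\Delta}Q$ and signum terms against the corresponding pieces of $\Delta b$, reduce the residue to a cross term in $\|e_\Delta\|$, and complete the square to invoke Theorem \ref{thm:iss}. The one step genuinely specific to this case --- reconciling the quadratic coefficient $\|Q(\xi)\|^2+C_1^2$ with the linear coefficient $\|Q(\xi)\|+C_1$ via $(\|Q(\xi)\|+C_1)^2\leq 2(\|Q(\xi)\|^2+C_1^2)$, including the degenerate case $\|Q(\xi)\|=C_1=0$ --- is handled correctly and yields a valid uniform gain $\rho(\|e_\Delta\|)=\sqrt{2}\,\|e_\Delta\|$.
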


\subsection{Multi-parametric ES-based Adaptation} \label{sec:ES}

Let us now define the following cost function
\begin{align}
\label{eq:cost_gen}
J(\widehat{\Delta},t)~=~F(z(\widehat{\Delta}),t)
\end{align}
where $F:\mathbb{R}^{n}\times \mathbb{R}^+\rightarrow
\mathbb{R}^+$, $F(0,t) = 0$, $F(z,t)>0$ for $z\neq 0$. We need the
following assumptions on $J$.

\begin{assumption}
\label{asp:cost} The cost function $J$ has a local minimum at
$\widehat{\Delta}^*(t) = \Delta(t)$.
\end{assumption}

\begin{assumption}
\label{asp:Lips} $|\frac{\partial J (\widehat{\Delta},t)}{\partial
t}|<\rho_J$, for any $t\in\mathbb{R}^+$ and any
$\widehat{\Delta}\in\mathbb{R}^p$.
\end{assumption}

\begin{remark}
Assumption \ref{asp:cost} simply states that the cost function $J$
has at least a local minimum at the true values of the uncertain
parameters.
\end{remark}

We can now present the following result for Case 1, i.e. the case
that is studied in Section \ref{sec:case1}.

\begin{lemma}
Consider the system (\ref{eq:dyn_z}) with the cost function
(\ref{eq:cost_gen}), under Assumptions
\ref{asp:cost}-\ref{asp:Lips} and the assumption that $\Delta(t)
= [\Delta_1(t),\dots,\Delta_m(t)]^T$, with the feedback controller
(\ref{eq:full}), where $u_n$ is given by (\ref{eq:norm}) and $u_r$
is given by (\ref{eq:case1}), and $\widehat{\Delta}(t)$ is
estimated through the MES algorithm
\begin{align}
\label{eq:esa_gen} \dot{\widehat{\Delta}}_i~=~a_i\sqrt{\omega_i}
\cos(\omega_i t)- k_i \sqrt{\omega_i}\sin(\omega_i t)
J(\widehat{\Delta},t),
\end{align}
where $i\in\{1,2,\dots, p\}$, $a_i>0$, $k_i>0$,
$\omega_i\neq\omega_j$, and $\omega_i>\omega^*$, with $\omega^*$
large enough, ensures that the norm of the tracking error admits
the following bound
\begin{align*}
\|z(t)\|\leq \beta(\|z(t_0)\|,t) + \gamma(\sup_{0\leq\tau\leq t}
\|e_\Delta(\tau)\|)
\end{align*}
where $\beta\in\mathcal{KL}$, $\gamma\in\mathcal{K}$ and
$\|e_\Delta\|$ satisfies:
\begin{enumerate}
\item $(\frac{1}{\omega},d)$-Uniform Stability: For every
$c_2\in(d,\infty)$, there exists $c_1\in(0,\infty)$ and
$\hat{\omega}>0$ such that for all $t_0\in\mathbb{R}$ and for all
$e_\Delta(0)\in\mathbb{R}^m$ with $\|e_\Delta(0)\|<c_1$ and for
all $\omega>\hat{\omega}$,
\begin{align*}
\|e_\Delta(t,e_\Delta(0))\|<c_2, \quad \forall t\in[t_0,\infty)
\end{align*}
\item $(\frac{1}{\omega},d)$-Uniform Ultimate Boundedness: For
every $c_1\in(0,\infty)$, there exists $c_2\in(d,\infty)$ and
$\hat{\omega}>0$ such that for all $t_0\in \mathbb{R}$ and for all
$e_\Delta(0)\in\mathbb{R}^m$ with $\|e_\Delta(0)\|<c_1$ and for
all $\omega>\hat{\omega}$,
\begin{align*}
\|e_\Delta(t,e_\Delta(0))\|<c_2, \quad \forall t\in[t_0,\infty)
\end{align*}
\item $(\frac{1}{\omega},d)$-Global Uniform Attractivity: For all
$c_1,~c_2 \in(d,\infty)$ there exists $T\in[0,\infty)$ and
$\hat{\omega}>0$ such that for all $t_0\in\mathbb{R}$ and for all
$e_\Delta(0)\in\mathbb{R}^m$ with $\|e_\Delta(0)\|<c_1$ and for
all $\omega>\hat{\omega}$,
\begin{align*}
\|e_\Delta(t,e_\Delta(0))\|<c_2, \quad \forall t\in[t_0+T,\infty)
\end{align*}
\end{enumerate}
\end{lemma}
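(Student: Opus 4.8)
The plan is to decompose the statement into two parts handled by different tools: the ISS bound on the tracking error $z(t)$, which is inherited directly from the robustness result already established, and the characterization of the parameter estimation error $e_\Delta(t)$ produced by the MES loop, which requires an averaging argument.

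First I would dispose of the bound on $z$. Theorem \ref{thm:case1} already establishes that the closed-loop dynamics (\ref{eq:cl_case1}) are ISS from the input $e_\Delta(t)$ to the state $z(t)$ under the controller (\ref{eq:full}) with $u_r$ as in (\ref{eq:case1}). Hence, by Definition \ref{def:ISS}, there exist $\beta\in\mathcal{KL}$ and $\gamma\in\mathcal{K}$ such that
\[
\|z(t)\|\leq\beta(\|z(t_0)\|,t-t_0)+\gamma\Big(\sup_{t_0\leq\tau\leq t}\|e_\Delta(\tau)\|\Big),
\]
which is exactly the asserted estimate; no additional work is needed for this half beyond citing the preceding theorem.

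The core of the proof is the behaviour of $e_\Delta=\Delta(t)-\widehat{\Delta}(t)$ under the MES update (\ref{eq:esa_gen}). I would differentiate and substitute the update law componentwise to get $\dot{e}_{\Delta,i}=\dot{\Delta}_i(t)-a_i\sqrt{\omega_i}\cos(\omega_i t)+k_i\sqrt{\omega_i}\sin(\omega_i t)\,J(\widehat{\Delta},t)$, and recognize this as a high-dither-frequency extremum-seeking system. Taking $1/\omega^*$ as the small parameter and rescaling time, I would place the system in the standard form required for averaging. Using the local minimum at $\widehat{\Delta}^*(t)=\Delta(t)$ from Assumption \ref{asp:cost} together with a local quadratic model of the cost near that minimizer, the averaged dynamics reduce to a gradient-like flow driven by $-\nabla J$, perturbed by the drift $\dot{\Delta}(t)$ of the minimizer and by residual oscillatory terms whose size is controlled by the dither amplitudes $a_i$.

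The three asserted properties---$(\tfrac{1}{\omega},d)$-uniform stability, $(\tfrac{1}{\omega},d)$-uniform ultimate boundedness, and $(\tfrac{1}{\omega},d)$-global uniform attractivity---together constitute semi-global practical asymptotic stability with the two small parameters $1/\omega$ and $d$, and I would obtain them by applying a semi-global practical averaging theorem (in the spirit of the dither-ES convergence analysis of \cite{Rote00,Nesic13}) to the averaged gradient system. Assumption \ref{asp:cost} makes the origin of the frozen-$\Delta$ averaged system locally asymptotically stable, while the residual radius $d$ absorbs the $O(\sqrt{\sum_i a_i^2})$ dither bias together with the tracking lag induced by the time-variation bound $\rho_J$ of Assumption \ref{asp:Lips}. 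The main obstacle is precisely this coupling: standard averaging presumes a time-invariant (or periodic) vector field, whereas here the minimizer $\widehat{\Delta}^*(t)$ drifts in time, so I would need a time-varying averaging estimate and careful bookkeeping to ensure that $\omega$ can be taken large enough---relative to the drift rate governed by $\rho_J$---that the ultimate bound shrinks to the prescribed residual. Matching the two parameters $(1/\omega,d)$ simultaneously, i.e. exhibiting for each prescribed pair $c_1,c_2$ in the three definitions a threshold $\hat{\omega}$, is the delicate step; once the resulting bound on $\|e_\Delta\|$ is in hand, inserting it into the ISS estimate above closes the argument.
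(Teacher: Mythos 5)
Your decomposition is exactly the one the paper intends: the bound on $\|z(t)\|$ is obtained verbatim from the ISS property established in Theorem \ref{thm:case1} via Definition \ref{def:ISS}, and the three $(\frac{1}{\omega},d)$ properties of $\|e_\Delta\|$ are the standard semi-global practical uniform asymptotic stability conclusions for the dither-based MES loop. Note, however, that the paper prints no proof at all for this particular lemma; the closest printed analogue is the proof of Lemma \ref{lem:case1} in the state-dependent section, which does precisely what you propose but delegates the entire estimation-error analysis to a citation of the ES convergence literature (\cite{Rote00}) rather than carrying out the averaging. Your sketch is therefore consistent with, and somewhat more explicit than, the paper's treatment: you correctly identify that the time-variation of the minimizer $\widehat{\Delta}^*(t)=\Delta(t)$ is what forces the time-varying averaging estimate and that Assumption \ref{asp:Lips} (the bound $\rho_J$ on $\partial J/\partial t$) is what controls the resulting tracking lag inside the residual radius $d$. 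The one step you leave unexecuted --- exhibiting $\hat{\omega}$ for each pair $(c_1,c_2)$ --- is also left unexecuted by the paper, which simply invokes the cited averaging results, so there is no discrepancy of substance between your plan and the authors' (implicit) argument.
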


Similar bounds can be derived for the two remaining cases but are
omitted here because of space constraints.

\section{Mechatronic Example}
\label{sec:sim}

We consider here a two-link robot manipulator example. The
dynamics for the manipulator in the nominal case, is given by (see
e.g. \cite{Spong92})
\begin{align}
\label{eq:robot}
H(q)\ddot{q} + C(q,\dot{q})\dot{q}+ G(q)~=~\tau,
\end{align}
where $q \triangleq [q_1,q_2]^T$ denotes the two joint angles and $\tau \triangleq [\tau_1,\tau_2]^T$ denotes the two joint torques. The matrix
$H$ is assumed to be non-singular and is given by
\[ H~\triangleq~\left[ \begin{array}{ccc}
H_{11} & H_{12} \\
H_{21}& H_{22}\\
\end{array} \right]\]
where
\begin{equation}
\label{eq:robot_c}
\begin{array}{l}
\displaystyle H_{11} ~=~m_1\ell_{c_1}^2+I_1+m_2[\ell_1^2+\ell_{c_2}^2+ 2\ell_1\ell_{c_2}\cos(q_2)]+I_2 \\
\displaystyle H_{12} ~=~ m_2\ell_1\ell_{c_2}\cos(q_2)+m_2\ell_{c_2}^2+I_2\\
\displaystyle H_{21} ~=~H_{12}\\
\displaystyle H_{22} ~=~ m_2\ell_{c_2}^2+I_2
\end{array}
\end{equation}
The matrix $C(q,\dot{q})$ is given by
\[ C(q,\dot{q})~\triangleq~\left[ \begin{array}{ccc}
-h\dot{q_2} \quad& -h\dot{q_1}-h\dot{q_2} \\
h\dot{q_1}\quad&0\end{array} \right],\]
where $h~=~m_2\ell_1\ell_{c_2}\sin(q_2)$. The vector $G ~=~ [G_1,G_2]^T$ is given by
\begin{equation}
\begin{array}{l}
\displaystyle G_1 =~m_1\ell_{c_1}g\cos(q_1)+m_2 g[\ell_2\cos(q_1+q_2)+\ell_1\cos(q_1)]\\
\displaystyle G_2 =~ m_2\ell_{c_2}g\cos(q_1+q_2)
\end{array}
\label{eq:robot_g}
\end{equation}
In our simulations, we assume that the parameters take values according to \cite{Spong92} summarized in Table \ref{tab:robot}.
%
\begin{table}
\caption{System Parameters for the manipulator example.}
\label{tab:robot}
\centering
\begin{tabular}{|c|c|}
  \hline
  Parameter & Value \\\hline
  $I_2$ & $\frac{5}{12}~ [kg\cdot m^2]$ \\\hline
  $m_1$ & $10~ [kg]$ \\\hline
  $m_2$ & $5~ [kg]$\\\hline
  $\ell_1$ & $1~ [m]$ \\\hline
  $\ell_2$ & $1 ~[m]$ \\\hline
  $\ell_{c_1}$ & $0.5 ~[m]$ \\\hline
  $\ell_{c_2}$  & $0.5~ [m]$\\\hline
  $I_1$ & $\frac{10}{12}~[kg\cdot m^2]$ \\\hline
  $g$ & $9.8~ [m/s^2]$ \\
  \hline
\end{tabular}
\end{table}
The system dynamics (\ref{eq:robot}) can be rewritten as
\begin{align}
\label{eq:norm_robot}
\ddot{q}~=~H^{-1}(q) \tau -  H^{-1}(q) \left[C(q,\dot{q})\dot{q}+ G(q)\right],
\end{align}
Thus, the nominal controller is given by
\begin{align}
\tau_{n}~=&~\left[C(q,\dot{q})\dot{q}+ G(q)\right] \notag\\
&+~ H(q)\left[\ddot{q_{d}} - K_2(\dot{q}-\dot{q_{d}})-K_1(q-q_{d})\right],
\end{align}
where $q_d =[q_{1d},q_{2d}]^T$, denotes the desired trajectory and the feedback gains $K_1>0$, $K_2>0$, are chosen such that the tracking error will go to zero asymptotically. For simplicity, we use the feedback gains $K^i_j = 1$ in (\ref{eq:feedback}) for $i = 1,2$ and $j = 1,2$ in our simulations.
The reference trajectory is given by the following function from the initial time $t_0 = 0$ to the final time $t_f$, where
\begin{align*}
q_{id}(t)~=~\frac{1}{1+\exp{(-t)}}, \quad  i~=~1,2
\end{align*}
\subsection{State dependent uncertainties}
Now we introduce an uncertain term to the nonlinear model
(\ref{eq:robot}). In particular, we assume that there exist
additive uncertainties in the model (\ref{eq:norm_robot}), i.e.
\begin{align}
\label{eq:uncertain_robot} \ddot{q}=~H^{-1}(q) \tau -  H^{-1}(q)
\left[C(q,\dot{q})\dot{q}+ G(q)\right] + \Delta b(q).
\end{align}
We assume additive uncertainties on the gravity vector, such that
\begin{align}
\label{eq:robot_case2} \Delta b(q) =  \Delta \times G(q),
\end{align}
so that we have $\|\Delta b(q)\| \leq \|\Delta\|\|G(q)\|$.  For
simplicity, we assume that $\Delta$ is a diagonal matrix given by
$\Delta  ={\rm diag}\{\Delta_1, \Delta_2\}$. The robust controller
term $\tau_{r}$ is designed according to (\ref{eq:case2}), where
$H=A^{-1},\;L=G$, and finally the two unknown parameters
$\Delta_1$ and $\Delta_2$ are estimated by the MES, as shown in
the next section.

\subsection{MES Based uncertainties estimation}
\label{sec:constant}

First, we choose the following performance cost function
\begin{equation}\begin{array}{c} J=Q_1\int_{0}^{t_f}
(q-q_d)^T(q-q_d) {\rm{d}} t \\+Q_2\int_{0}^{t_f} (\dot q-\dot
{q}_d)^T(\dot q-\dot q_d) \rm{d}t,
\end{array}\label{eq:cost}
\end{equation}
where $Q_1>0$ and $Q_2>0$ denote the weighting parameters. Then,
the two unknown parameters $\Delta_1$ and $\Delta_2$ are estimated
by the MES (which is a discrete version of (\ref{eq:mes}))
\begin{equation}
\begin{array}{l}
\displaystyle {x}_i (k+1) ~=~x_i(k) + a_i t_f \sin(\omega_i t_f  k +\frac{\pi}{2}) J\\
\displaystyle  \widehat{\Delta}_i(k+1)~=~x_i(k+1) + a_i
\sin(\omega_i  t_f k -\frac{\pi}{2}) ,\quad i  =1,2
\end{array}
\label{eq:esa}
\end{equation}
where $k ~=~0,~1,~2,\cdots$ denotes the iteration index, $x_i$ and
$\widehat{\Delta}_i$ ($i = 1,2$) start from zero initial
conditions. We simulate the system with $\Delta_1 = -1$ and
$\Delta_2 = -3$. The parameters that were used in the cost
function (\ref{eq:cost}) and the MES (\ref{eq:esa}) are summarized
in Table \ref{tab:case2}.
As shown in Fig. \ref{fig:comp_case2}, the ISS-based controller
combined with ES greatly improves the tracking performance. Fig.
\ref{fig:cost_case2} shows that the cost function starts at an
initial value around $6$ and decreases below $0.5$ within $100$
iterations and the value of the cost function is decreasing over
the iterations. Moreover, the estimate of the unknown parameters
converge to a neighborhood of the true parameter values, as shown
in Fig. \ref{fig:estimate_case2}.

\begin{table}
\caption{Parameters used in MES} \label{tab:case2} \centering
\begin{tabular}{ |c|c|c|c|c|c|c| }
 \hline
$Q_1$ & $Q_2$ & $a_1$ & $a_2$& $\omega_1$ &$\omega_2$& $t_f$ \\ [0.5ex]
\hline
$5$ & $5$ & $0.05$ & $0.04$ & $7.4$ & $7.5$ & $4$  \\
 \hline
\end{tabular}
\end{table}
%
\begin{figure}
\vspace{-0cm}
 \begin{center}
 \includegraphics[scale=0.45]{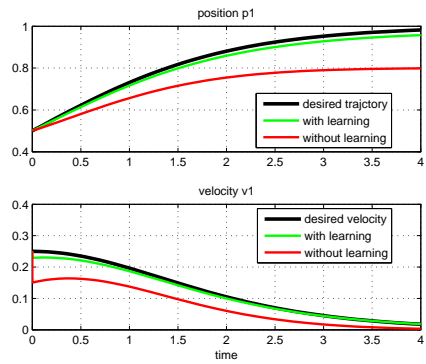}
 \vspace{-0cm}
 \caption{Obtained trajectories vs. Reference Trajectories for $q_1$ [rad] (in the top plot) and $\dot{q}_1$ [rad/sec] (in the bottom plot)}
 \end{center}
 \label{fig:comp_case2}
\end{figure}
\begin{figure}
 \centering
 \vspace{-0cm}
 \hspace{-0cm}\includegraphics[scale=0.45]{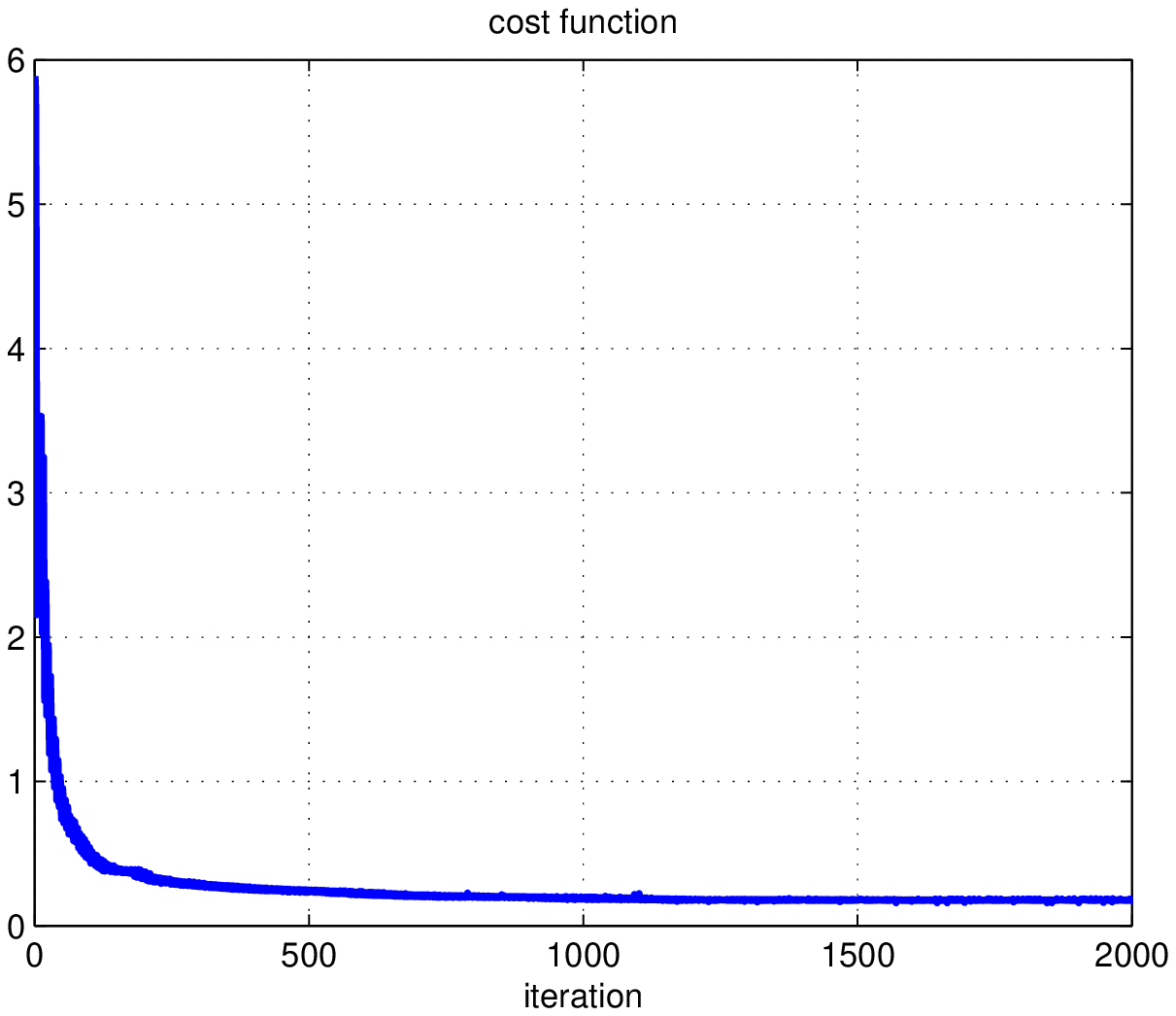}
 \vspace{-0cm}
 \caption{The cost function vs. the number of iterations}
 \label{fig:cost_case2}
\end{figure}
\begin{figure}
 \centering
 \vspace{-0cm}
 \includegraphics[scale=0.45]{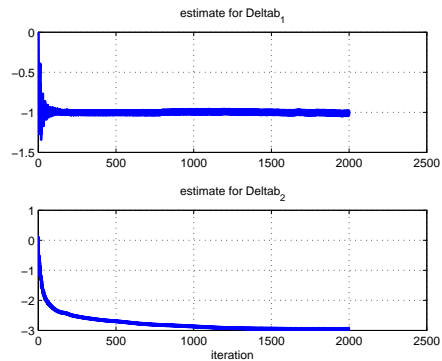}
 \vspace{-0cm}
 \caption{Estimate of parameter $\Delta_1$ (in the top plot) and parameter $\Delta_2$ (in the bottom plot)}
 \label{fig:estimate_case2}
\end{figure}
%
%
%
%
%
%
%
%
%
\subsection{Time dependent uncertainties}
Now we assume that there exist additive time-dependent
uncertainties in the model (\ref{eq:norm_robot}), i.e.
\begin{align}
\label{eq:uncertain_robot} \ddot{q}=~H^{-1}(q) \tau -  H^{-1}(q)
\left[C(q,\dot{q})\dot{q}+ G(q)\right] + \Delta b(q,t).
\end{align}
We will illustrate our approach for the uncertain model
(\ref{eq:uncertain_robot}). In the following, we consider the cost
function
\begin{align}
\label{eq:cost} J~=~Q_1\int_{0}^{t_f} \|q-q_d\|^2 {\rm{d}} t
+Q_2\int_{0}^{t_f} \|\dot q-\dot {q}_d\|^2 \rm{d}t,
\end{align}
where $Q_1>0$ and $Q_2>0$ denote the weighting parameters.

\subsection{Time-varying MES Based Adaptation}
\label{sec:constant} Due to space limitation, we report hereafter
only the case presented in Section \ref{sec:case1}, when $\Delta
b(q,t)$ is simply a time-varying vector $\Delta(t) $, where
$\Delta(t)  =~[\Delta_1(t), \Delta_2(t)]^T$. However, we underline
that we have successfully tested the remaining cases and that all
the results will be reported in a longer journal version of
this work.\\
Here the controller is designed according to Theorem
\ref{thm:case1} and the two unknowns $\Delta_1(t)$ and
$\Delta_2(t)$ are identified by the MES (\ref{eq:esa_gen}) such
that the cost function $J$ in (\ref{eq:cost}) is minimized.

We simulate the system with
\begin{align}
\label{eq:time-varying}
 {\Delta }_1(t)~&=~1 - 0.14\sin(0.01t)\notag\\
 {\Delta }_2(t)~&=~1 - 0.12\cos(0.01t).
\end{align}
The estimate of these two parameters $\widehat{\Delta}_i$ ($i =
1,2$) are computed using a discrete version of (\ref{eq:esa_gen}),
given by
\begin{align}
{\Delta}_i (k+1) ~=&~\Delta_i(k) + t_f(\alpha_i\sqrt{\omega_i}\cos(\omega_i t_f k)\notag\\
&~~-\kappa_i \sqrt{\omega_i} \sin(\omega_i t_f  k) J), \quad i
=1,2 \label{eq:esa}
\end{align}
where $k ~=~0,1,2,\cdots$ denotes the iteration index and
$\widehat{\Delta}_i$ ($i = 1,2$) start from zero initial
conditions. The parameters used in the cost function
(\ref{eq:cost}) and the MES (\ref{eq:esa}) are summarized in Table
\ref{tab:varying1}. For more details about the tuning of the
parameters in the MES, we refer the reader to \cite{Ariy03}.
However, we underline here that the frequencies $\omega_1$ and
$\omega_2$ have been selected high enough to ensure efficient
exploration on the search space and to ensure convergence and that
the amplitudes $\alpha_i$ and $\kappa_i$ ($i = 1,2$) of the dither
signals have been selected such that the search is fast enough.
The uncertain parameter vector $\widehat{\Delta}$ is updated for
each cycle, i.e. at the end of each cycle at $t = t_f$, the cost
function $J$ is updated, and the new estimate of the parameters is
computed for the next cycle. The purpose of using MES along with
the ISS-based controller is to improve the performance of the
controller by better estimating the unknown parameters over many
cycles, hence decreasing the estimation errors over time to
provide better trajectory tracking. As shown in Fig.
\ref{fig:comp_case1}, the ISS-based controller combined with ES
greatly improves the tracking performance. Fig.
\ref{fig:costzoom_case1} show that the cost function starts at an
initial value around $7$ and decreases below $1$ within $20$
iterations. Moreover, the estimate of the unknown parameters
converge to a neighborhood of the true parameter trajectories
within $100$ iterations, as shown in Fig.
\ref{fig:estimate_case1}.


\begin{table}
\caption{parameters used for case 1.} \label{tab:varying1}
\centering
\begin{tabular}{ |c|c|c|c|c|c|c|c|c|c| }
 \hline
$Q_1$ & $Q_2$ & $\alpha_1$ & $\alpha_2$& $\omega_1$ &$\omega_2$&
$t_f$ &$\kappa_1$ &$\kappa_2$ \\ [0.5em] \hline
$0.325$ & $0.325$ & $0.01$ & $0.01$ & $9.9$ & $9.8$ & $4$  & $0.01$ & $0.01$\\
 \hline
\end{tabular}
\end{table}


\begin{figure}[!t]
 \centering
 \includegraphics[scale=0.45]{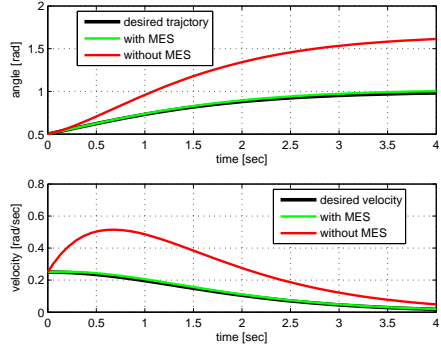}
 \caption{Case 1: Obtained trajectories vs. reference trajectories for $q_1$ (in the top plot) and $\dot{q}_1$ (in the bottom plot).}
 \label{fig:comp_case1}
\end{figure}

\begin{figure}[!t]
 \centering
 \includegraphics[scale=0.45]{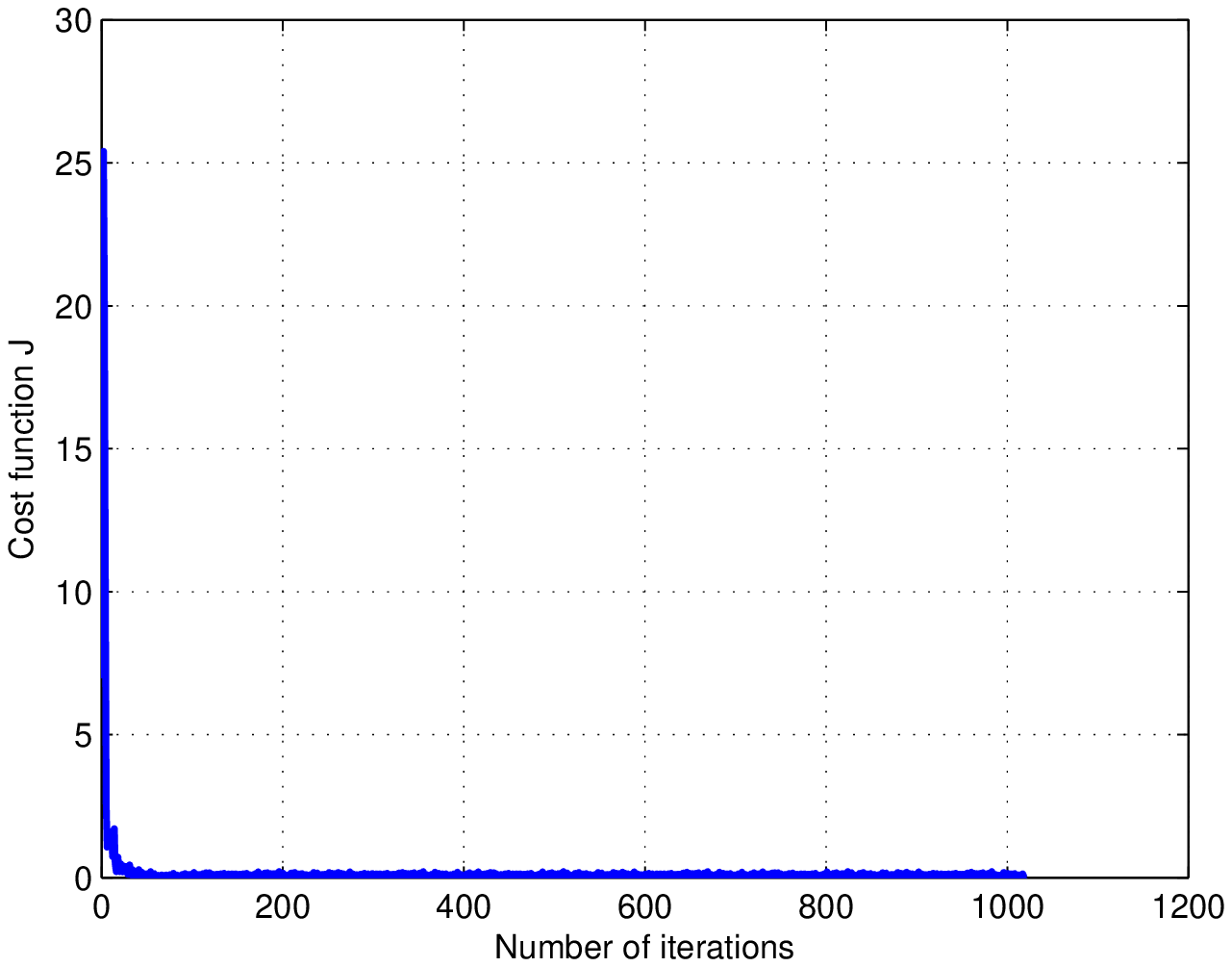}
 \caption{Case 1: The cost function vs. the number of iterations.}
 \label{fig:cost_case1}
\end{figure}

\begin{figure}[!t]
 \centering
 \includegraphics[scale=0.45]{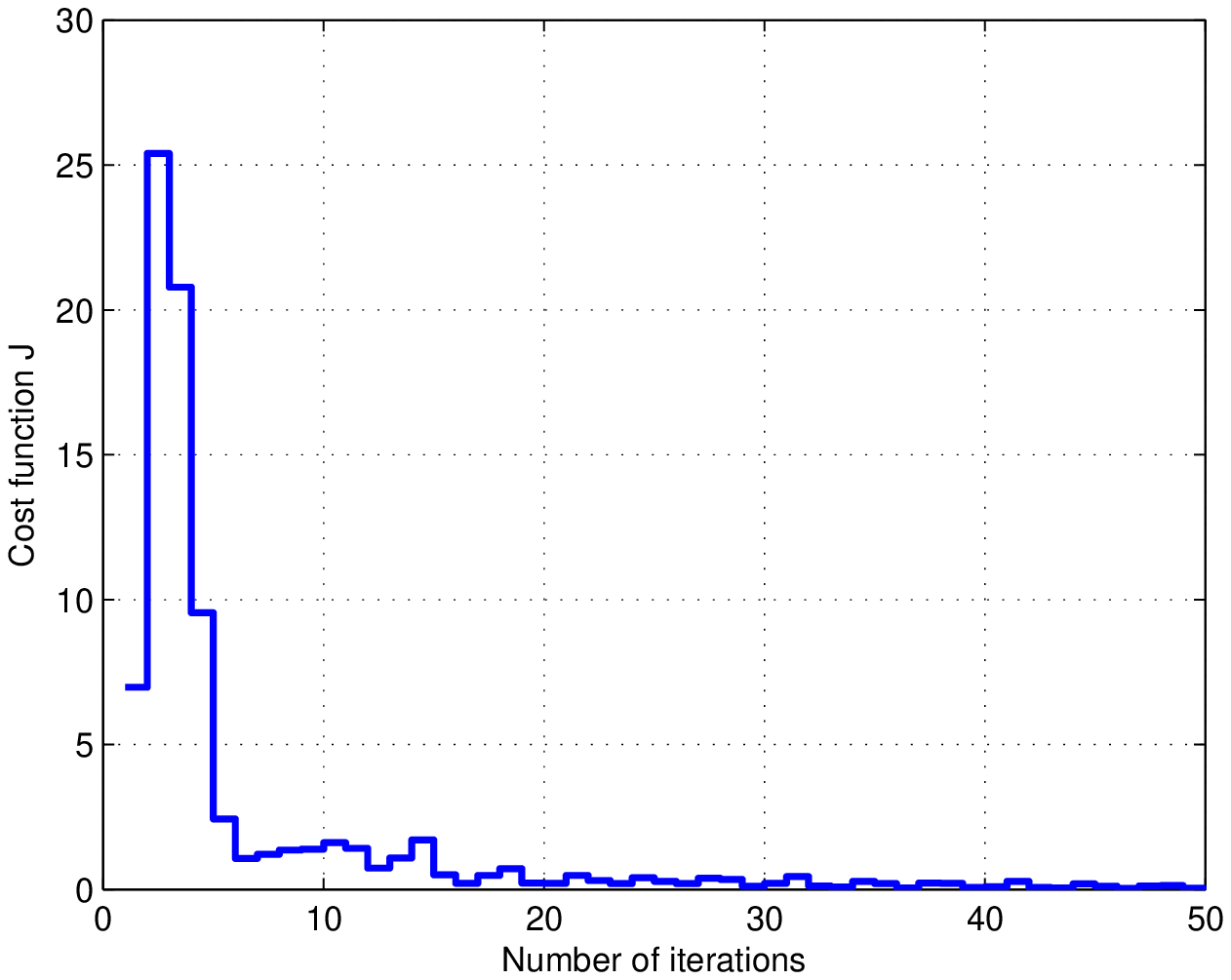}
 \caption{Case 1: The zoom-in cost function vs. the number of iterations.}
 \label{fig:costzoom_case1}
\end{figure}

\begin{figure}[!t]
 \centering
 \includegraphics[scale=0.45]{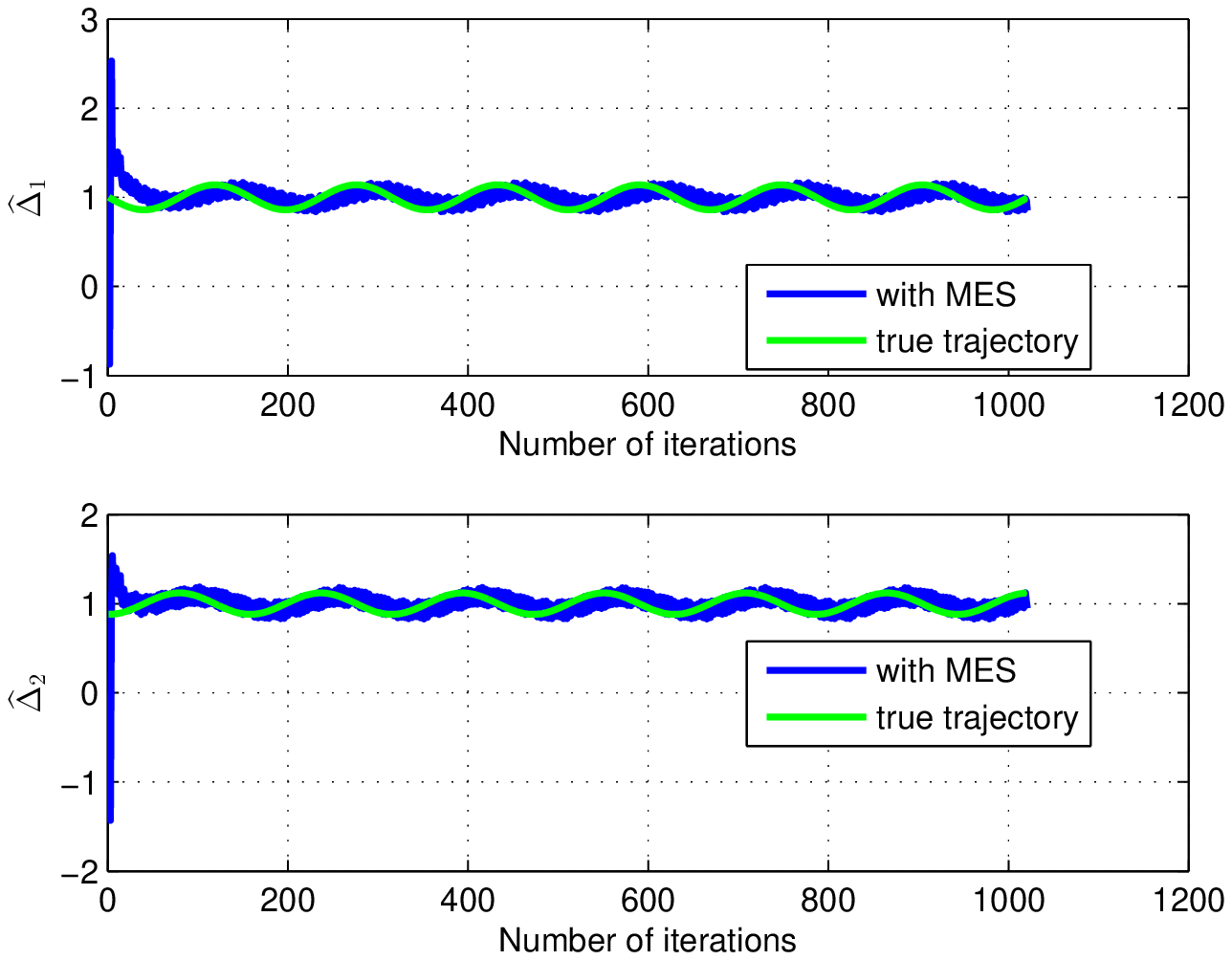}
 \caption{Case 1: Estimate of parameter $\Delta_1$ (in the top plot) and parameter $\Delta_2$ (in the bottom plot).}
 \label{fig:estimate_case1}
\end{figure}

\section{Conclusion}
\label{sec:con} In this paper, we studied the problem of extremum
seeking-based indirect adaptive control for nonlinear systems
affine in the control with bounded additive state-dependent
uncertainties. We have proposed a robust controller which renders
the feedback dynamics ISS w.r.t the parameter estimation errors.
Then we have combined the ISS feedback controller with a
model-free ES algorithm to obtain a learning-based adaptive
controller, where the ES is used to estimate the uncertain part of
the model. We have presented the stability proof of this
controller and have shown a detailed application of this approach
on a two-link robot manipulator example. Future work will deal
with considering controllers under input constraints, using
different ES/learning algorithms with less restrictive tuning
conditions, and comparing the obtained controllers to some
available classical nonlinear adaptive controllers.

\bibliographystyle{IEEEtran}
\bibliography{mxia_paper}

\end{document}